        \def\theequation{\thesection.\arabic{equation}}
\newcommand{\ti}[1]{\tilde{#1}}
\newcommand{\mF}{{\mathcal F}}
\newcommand{\mR}{{\mathcal R}}
\newcommand{\mH}{{\mathcal H}}
\newcommand{\vf}{\varphi}
\newcommand{\al}{\alpha}
\newcommand{\be}{\beta}
\newcommand{\om}{\omega}
\newcommand{\vth}{\vartheta}
\newcommand{\MatM}{ {\rm Mat}(M,\mathbb C) }
\newcommand{\mC}{\mathbb C}
\newcommand{\mZ}{\mathbb Z}
\newcommand{\ka}{\kappa}
\newtheorem{lemma}{Lemma}[section]
\newenvironment{proof}{\par\noindent{\bf Proof.}}{\hfill$\scriptstyle\blacksquare$}
\def\beq{\begin{equation}}
\def\eq{\end{equation}}
\def\res{\mathop{\hbox{Res}}\limits}
\begin{document}

\setcounter{page}{1}

\begin{center}

\

\vspace{-0mm}

{\Large{\bf On R-matrix identities related to elliptic  }}

\vspace{3mm}

{\Large{\bf anisotropic spin Ruijsenaars-Macdonald operators}}

 \vspace{15mm}

 {\Large {M. Matushko}}$\,^{\diamond\,*}$
\qquad\quad\quad
 {\Large {A. Zotov}}$\,^{\diamond\,\bullet}$

  \vspace{5mm}

$\diamond$ -- {\em Steklov Mathematical Institute of Russian
Academy of Sciences,\\ Gubkina str. 8, 119991, Moscow, Russia}

$*$ -- {\em Center for Advanced Studies, Skoltech, 143026, Moscow, Russia}

$\bullet$ -- {\em Institute for Theoretical and Mathematical Physics,\\ Lomonosov Moscow State University, Moscow, 119991, Russia}

   \vspace{3mm}

 {\small\rm {e-mails: matushko@mi-ras.ru, zotov@mi-ras.ru}}

\end{center}

\vspace{0mm}

\begin{abstract}
We propose and prove a set of identities for ${\rm GL}_M$ elliptic $R$-matrix (in the fundamental representation).
 In the scalar case ($M=1$) these are elliptic function identities derived by S.N.M. Ruijsenaars as necessary and sufficient conditions for his kernel identity underlying
construction of integral solutions to quantum spinless Ruijsenaars-Schneider model.
In this respect the result
of the present paper can be considered as the first step towards constructing solutions of quantum eigenvalue problem for
the anisotropic spin Ruijsenaars model.
\end{abstract}

%

{\small{
\tableofcontents
}}


\section{Introduction}
\setcounter{equation}{0}

In his paper \cite{Ruij} S.N.M. Ruijsenaars introduced a set of commuting elliptic difference operators
\begin{equation}\label{Dscalar}
 D_k=\sum\limits_{\substack{|I|=k}}\prod\limits_{\substack{i\in I \\ j\notin I}}\phi(z_j-z_i)\prod_{i\in I}p_{i},\qquad k=1,\dots,N\,,
\end{equation}
where $\phi(z)=\phi(\hbar,z)$ is the elliptic Kronecker function (\ref{a0962}).
The sum in (\ref{Dscalar}) is over all subsets $I$ of size $k$ of $\{1,\dots,N\}={\mathcal N}$.
The shift operators $p_i$ act on a function $f(z_1,\dots ,z_N)$ of complex variables ${\bf z}=z_1,\dots ,z_N$ as follows:
\beq\label{p_i}
(p_if)(z_1,z_2,\dots z_N)=\exp\left(-\eta \frac{\partial}{\partial z_i}\right)f(z_1,\dots,z_N)=f(z_1,\dots,z_i-\eta,\dots, z_N).
\eq
 The operators (\ref{Dscalar}) are called
the Ruijsenaars-Macdonald operators since in the trigonometric limit they reproduce the
Macdonald operators.

The proof of mutual commutativity of $D_k$ was shown in \cite{Ruij} to be equivalent to the following set
of functional equations:
\begin{equation}\label{IdenRuij}
\sum_{|I|=k}\left(\prod\limits_{\substack{i\in I\\ j\notin I}}\phi(z_j-z_i)\phi(z_i-z_j-\eta)-\prod\limits_{\substack{i\in I\\ j\notin I}}\phi(z_i-z_j)\phi(z_j-z_i-\eta)\right)=0\,,\quad
k=1,\dots,N\,,
\end{equation}
and the function $\phi(z)$  (\ref{a0962}) was shown to satisfy (\ref{IdenRuij}).

Next, it was shown in \cite{Ruij2006} that the kernel identity $(D_k({\bf x})-D_k(-{\bf y}))\Psi(\bf x,\bf y)=0$
holds true for $\Psi(\bf x,\bf y)$ given by a certain product (and/or ratio) of elliptic gamma functions. It allows to construct
integral solutions to quantum eigenproblem for operators $D_k$ \cite{HR,Komori,KK}. This approach was later extended
to a more general class of operators \cite{Lang}. The proof of the kernel identity is performed by a direct substitution which provides a set of identities for the function $\phi$. Namely, it was proved in \cite{Ruij2006} that the kernel identities is valid iff the following set of relations for $2N$ complex variables (${\bf x}=x_1,...,x_N$ and ${\bf y}=y_1,...,y_N$) holds:
 \begin{equation}
 \label{e01}
\displaystyle{
\sum\limits_{\substack{I\subset {\mathcal N} \\|I|=k}} \left(
\prod\limits_{\substack{i\in I \\ j\in I^c}} \phi(x_i-x_j)
\prod\limits_{\substack{i\in I \\ j\in {\mathcal N}}} \phi(y_j-x_i)
-
\prod\limits_{\substack{i\in I \\ j\in I^c}} \phi(y_j-y_i)
\prod\limits_{\substack{i\in I \\ j\in {\mathcal N}}} \phi(y_i-x_j)
\right)
=0\,,
\quad
k=1,\ldots,N\,,
  }
\end{equation}
where  $I^c$ means the complement of a set $I$ in $\mathcal N$. It was shown in \cite{Ruij2006} that the the elliptic Kronecker function (\ref{a0962}) satisfies (\ref{e01}). For example, for $k=1$ (\ref{e01}) yields
 \begin{equation}
 \label{e02}
\displaystyle{
\sum\limits_{i=1}^N \left(
\prod\limits_{j\neq i}^N \phi(x_i-x_j)
\prod\limits_{j=1}^N \phi(y_j-x_i)
-
\prod\limits_{j\neq i}^N \phi(y_j-y_i)
\prod\limits_{j=1}^N \phi(y_i-x_j)
\right)
=0\,.
  }
\end{equation}
Under the substitution $x_i=z_i$, $y_i=z_i-\eta$ the latter identity reproduces the first one (with $k=1$) from (\ref{IdenRuij}). For $k>1$ the substitution $x_i=z_i$, $y_i=z_i-\eta$ provides particular case of (\ref{IdenRuij}) corresponding to  $\hbar=2\eta$.

Let us mention that the identities (\ref{e01}) arise in the context of B\"acklund transformations in the
classical (elliptic) Ruijsenaars-Schneider model \cite{HR}. Also, (\ref{e02}) provides the so-called self-dual form of the Ruijsenaars-Schneider model coming from the multi-pole ansatz for intermediate long-wave equation \cite{ZZ2}.

\paragraph{Purpose of the paper} is to extend the identities (\ref{e01}) to $R$-matrix identities in the following sense. The elliptic  Baxter-Belavin \cite{Baxter,Belavin} $R$-matrix (\ref{BB}) in the fundamental representation of ${\rm GL}_M$ Lie group can be considered as matrix generalization of the Kronecker function (\ref{a0962}) because
(\ref{a0962}) is a particular case of (\ref{BB}) corresponding to $M=1$. $R$-matrices do not commute (in the general case). Our aim is to find noncommutative analogue of (\ref{e01}) which reproduce (\ref{e01}) in the $M=1$ case.

In our previous paper \cite{MZ1} (see also \cite{MZ2} for applications to long-range spin chains) we introduced anisotropic spin generalization for the operators (\ref{Dscalar}). Similarly to the scalar case \cite{Ruij} we deduced and proved $R$-matrix generalizations of the
identities (\ref{IdenRuij}). These $R$-matrix identities are necessary and sufficient conditions for commutativity of spin Ruijsenaars-Macdonald operators. In this paper we propose and prove $R$-matrix generalizations for the identities
(\ref{e01}) from \cite{Ruij2006}:
\beq\label{e021}
  \begin{array}{c}
  \displaystyle{
   \sum\limits_{1\leq i_1<...<i_k\leq N}\left(
   \overrightarrow{\prod\limits_{l_k=i_k+1}^N} R_{i_k l_k}^{\hbar}
   \overrightarrow{\prod\limits^N_{\hbox{\tiny{$ \begin{array}{c}{ l_{k-1}\!=\!i_{k-1}\!+\!1 }\\{ l_{k-1}\!\neq\! i_k } \end{array}$}}}}R_{i_{k-1} l_{k-1}}^{\hbar}\right.
      \ \ldots\
   \overrightarrow{\prod\limits^N_{\hbox{\tiny{$ \begin{array}{c}{ l_{1}\!=\!i_{1}\!+\!1 }\\{ l_{1}\!\neq\! i_2...i_k } \end{array}$}}}}R_{i_{1} l_{1}}^{\hbar}
   \times
   }
   \end{array}
 \eq
   $$
   \begin{array}{c}
     \displaystyle{
 \times
 \overrightarrow{\prod\limits^{2N}_{j_1=N+1}} R_{j_1i_1}^{\hbar}
 \overrightarrow{\prod\limits^{2N}_{j_2=N+1}} R_{j_2i_2}^{\hbar}
 \ \ldots\
 \overrightarrow{\prod\limits^{2N}_{j_k=N+1}} R_{j_ki_k}^{\hbar}
  \times
    }
   \\ \ \\
     \displaystyle{
       \times \left.
   \overrightarrow{\prod\limits^{i_k-1}_{\hbox{\tiny{$ \begin{array}{c}{ m_k\!=\!1 }\\{ m_k\!\neq\! i_{1}...i_{k-1}} \end{array}$}}}}R_{i_k m_k}^{\hbar}
   \overrightarrow{\prod\limits^{i_{k-1}-1}_{\hbox{\tiny{$ \begin{array}{c}{ m_{k-1}\!=\!1 }\\{ m_{k-1}\!\neq\! i_{1}...i_{k-2}} \end{array}$}}}}R_{i_{k-1} m_{k-1}}^{\hbar}
      \ \ldots\
  \overrightarrow{\prod\limits^{i_{1}-1}_ {m_1=1}} R_{i_{1} m_{1}}^{\hbar}\right)-
 }
 \end{array}
 $$
 $$
  \begin{array}{c}
  \displaystyle{
   - \sum\limits_{N+1\leq j_1<...<j_k\leq 2N}\left(
   \overrightarrow{\prod\limits_{l_k=j_k+1}^{2N}} R_{l_k j_k }^{\hbar}
   \overrightarrow{\prod\limits^{2N}_{\hbox{\tiny{$ \begin{array}{c}{ l_{k-1}\!=\!j_{k-1}\!+\!1 }\\{ l_{k-1}\!\neq\! j_k } \end{array}$}}}}R_{l_{k-1}j_{k-1} }^{\hbar}\right.
      \ \ldots\
   \overrightarrow{\prod\limits^{2N}_{\hbox{\tiny{$ \begin{array}{c}{ l_{1}\!=\!j_{1}\!+\!1 }\\{ l_{1}\!\neq\! j_2...j_k } \end{array}$}}}}R_{l_{1}j_{1} }^{\hbar}
   \times
   }
   \end{array}
 $$
   $$
   \begin{array}{c}
     \displaystyle{
 \times
 \overrightarrow{\prod\limits^{N}_{i_1=1}} R_{j_1i_1}^{\hbar}
 \overrightarrow{\prod\limits^{N}_{i_2=1}} R_{j_2i_2}^{\hbar}
 \ \ldots\
 \overrightarrow{\prod\limits^{N}_{i_k=1}} R_{j_ki_k}^{\hbar}
  \times
    }
   \\ \ \\
     \displaystyle{
       \times \left.
   \overrightarrow{\prod\limits^{j_k-1}_{\hbox{\tiny{$ \begin{array}{c}{ m_k\!=\!N+1 }\\{ m_k\!\neq\! j_{1}...j_{k-1}} \end{array}$}}}}R_{ m_k j_k}^{\hbar}
   \overrightarrow{\prod\limits^{j_{k-1}-1}_{\hbox{\tiny{$ \begin{array}{c}{ m_{k-1}\!=\!N+1 }\\{ m_{k-1}\!\neq\! j_{1}...j_{k-2}} \end{array}$}}}}R_{m_{k-1}j_{k-1} }^{\hbar}
      \ \ldots\
  \overrightarrow{\prod\limits^{j_{1}-1}_ {m_1=N+1}} R_{ m_{1}j_{1}}^{\hbar}\right)=0\,.
 }
 \end{array}
 $$
Notations are explained in the next Section.
 This result is the first step towards constructing the kernel identity and solutions to quantum problem of the (anisotropic) spin Ruijsenaars model. The latter is quite nontrivial problem since it requires a definition of matrix generalization of elliptic gamma function. We hope to clarify these questions in our future publications.

\section{Spin operators and R-matrix identities}
\setcounter{equation}{0}

Here we recall some notations and statements from \cite{MZ1}. Some additional notations are introduced
for the purposes of the present paper.

\paragraph{$R$-matrices and Yang-Baxter equations.}
Let ${\mathcal H}$ be a vector space ${\mathcal H}=(\mC^M)^{\otimes N}$.
The elliptic Baxter-Belavin $R$-matrix \cite{Baxter,Belavin}  is a linear map
(\ref{BB})
$R^\hbar_{ij}(z)\in {\rm End}({\mathcal H})$
acting non-trivially in the $i$-th and $j$-th tensor components of ${\mathcal H}$ only and
satisfying the quantum Yang-Baxter equation:
\beq\label{QYB2}
\begin{array}{c}
\displaystyle{
    R^{\hbar}_{ij}(u)  R^{\hbar}_{ik}(u+v) R^{\hbar}_{jk}(v) =
      R^{\hbar}_{jk}(v) R^{\hbar}_{ik}(u+v) R^{\hbar}_{ij}(u)
      }
\end{array}\eq
for any distinct integers $1\leq i,j,k\leq N$. Also, for any distinct integers $1\leq i,j,k,l\leq N$
\beq\label{QYB3}
\begin{array}{c}
\displaystyle{
    [R^{\hbar}_{ij}(u), R^{\hbar'}_{kl}(v)]=0\,.
      }
\end{array}\eq
Consider a pair $I,J$ of disjoint subsets in $\{1,\dots,N\}$. The elements in $I,J$ are in increasing ordering, that is for $J=\{j_1,j_2,\dots,j_k\}$ the elements $j_m$ are in increasing order $j_1<j_2<\dots<j_k$.
Similarly, for $I=\{i_1,i_2,\dots,i_l\}$ we have $i_1<i_2<\dots<i_l$.
We will use notations
$\overrightarrow{\prod\limits^{N}_ {j=1}} R_{ij}$ and $\overleftarrow{\prod\limits^{N}_ {j=1}} R_{ij}$, where the arrows mean the ordering. For example,
$\overrightarrow{\prod\limits^{N}_ {j=1}} R_{ij}=R_{i1}R_{i2}...R_{iN}$ and $\overleftarrow{\prod\limits^{N}_ {j=1}} R_{ij}=R_{iN}R_{i,N-1}...R_{i1}$.

Introduce the following notation:
\begin{equation}\label{RIJ1}
\mathcal{R}_{I,J}=\overleftarrow{\prod\limits_{\substack{i_1\in I\\i_1<j_1}}} R_{i_1,j_1}\left({z_{i_1}}-{z_{j_1}}\right) \overleftarrow{\prod\limits_{\substack{i_2\in I\\i_2<j_2}}} R_{i_2,j_2}\left({z_{i_2}}-{z_{j_2}}\right)\dots\overleftarrow{\prod\limits_{\substack{i_k\in I\\i_k<j_k}}}R_{i_k,j_k}\left({z_{i_k}}-{z_{j_k}}\right)\,.
\end{equation}
Using the property (\ref{QYB3}) it is easy to rewrite it as
\begin{equation}\label{RIJ2}
\mathcal{R}_{I,J}=\overrightarrow{\prod\limits_{\substack{j_l\in J\\j_l>i_l}}}R_{i_l,j_l}\left({z_{i_l}}-{z_{j_l}}\right) \overrightarrow{\prod\limits_{\substack{j_{l-1}\in J\\ j_{l-1}>i_{l-1}}}} R_{i_{l-1},j_{l-1}}\left({z_{i_{l-1}}}-{z_{j_{l-1}}}\right)\dots\overrightarrow{\prod\limits_{\substack{j_1\in J\\j_1>i_1}}} R_{i_1,j_1}\left({z_{i_1}}-{z_{j_1}}\right)\,.
\end{equation}
In a similar way we also introduce
\begin{equation}\label{RIJ'1}
\mathcal{R}_{I,J}'=\overrightarrow{\prod\limits_{\substack{j_l\in J\\j_l<i_l}}}R_{i_l,j_l}(z_{i_l}-z_{j_l})\overrightarrow{\prod\limits_{\substack{j_{l-1}\in J\\j_{l-1}<i_{l-1}}}} R_{i_{l-1},j_{l-1}}(z_{i_{l-1}}-z_{j_{l-1}})\dots \overrightarrow{\prod\limits_{\substack{j_1\in J\\j_1<i_1}}} R_{i_1,j_1}(z_{i_1}-z_{j_1})=
\end{equation}

\begin{equation}\label{RIJ'2}
=\overleftarrow{\prod\limits_{\substack{i_1\in I\\i_1>j_1}}} R_{i_1,j_1}\left({z_{i_1}}-{z_{j_1}}\right) \overleftarrow{\prod\limits_{\substack{i_2\in I\\i_2>j_2}}} R_{i_2,j_2}\left({z_{i_2}}-{z_{j_2}}\right)\dots\overleftarrow{\prod\limits_{\substack{i_k\in I\\i_k>j_k}}} R_{i_k,j_k}\left({z_{i_k}}-{z_{j_k}}\right)\,.
\end{equation}
For any disjoint subsets $A,B,C$ of $\{1,2,\dots,N\}$ the following identities hold true:
\begin{equation}\label{l21}
 \mathcal{R}_{C,A\cup B}\mathcal{R}_{B,A}=\mathcal{R}_{B\cup C,A}\mathcal{R}_{C,B}\,,
\end{equation}
\begin{equation}\label{l22}
 \mathcal{R}'_{A,B}\mathcal{R}'_{A\cup B,C}=\mathcal{R}'_{B,C}\mathcal{R}'_{A,B\cup C}\,.
\end{equation}
The latter identities follow from the Yang-Baxter equation (\ref{QYB2}).

Elliptic $R$-matrix satisfies the unitarity property
\beq\label{q03}\begin{array}{c}
    R^{\hbar}_{ij}(z) R^\hbar_{ji}(-z)= {\rm Id}\, \phi(\hbar,z)\phi(\hbar,-z)\,,
\end{array}\eq
Below we also use $R$-matrices in a slightly different normalization:
\beq\label{q04}
\begin{array}{c}
    R^{\hbar}_{ij}(z) = \phi(\hbar,z){\bar R}^{\hbar}_{ij}(z)\,.
\end{array}\eq
Then
\beq\label{q05}
\begin{array}{c}
    {\bar R}^{\hbar}_{ij}(z) {\bar R}^\hbar_{ji}(-z)= {\rm Id}\,.
\end{array}\eq
For the products defined above we have:
\begin{equation}\label{un01}
 \mathcal{R}_{I,J}\mathcal{R}_{J,I}'={\rm Id}\prod\limits_{\substack{i\in I,j\in J\\i<j}}\phi(h,z_i-z_j)\phi(h,z_j-z_i)
\end{equation}
and
\begin{equation}\label{un02}
 \mathcal{R}'_{I,J}\mathcal{R}_{J,I}={\rm Id}\prod\limits_{\substack{i\in I,j\in J\\i>j}}\phi(h,z_i-z_j)\phi(h,z_j-z_i)\,.
\end{equation}
Similarly,
\begin{equation}\label{un03}
 \bar{\mathcal{R}}_{I,J}\bar{\mathcal{R}}_{J,I}'=\bar{\mathcal{R}}_{J,I}'\bar{\mathcal{R}}_{I,J}={\rm Id}\,.
\end{equation}
Finally, the elliptic $R$-matrix (\ref{BB}) is skew-symmetric:
 \beq\label{r08}
 \begin{array}{c}
  \displaystyle{
 R^{\hbar}_{12}(z)=-R^{-\hbar}_{21}(-z)\,.
  }
 \end{array}
 \eq
 Below by writing $R^{-\hbar}_{12}(z)$ we assume $R^{\hbar}_{21}(-z)$.

In this paper we also use the following notations. For disjoint sets $I=\{i_1<i_2<\dots <i_k\}$ and $J=\{j_1<j_2<\dots<j_l\}$ we denote by ${\mathcal Y}_{I,J}^{\hbar} $ the following product of $R$-matrices
  \beq\label{Y1}
 {\mathcal Y}^{\hbar} _{I,J}=\overrightarrow{\prod_{j\in J}}R_{i_1,j}^{\hbar}  \dots\overrightarrow{\prod_{j\in J}}R_{i_k,j}^{\hbar}\,.
 \eq
 Due to (\ref{QYB3}) it can be written as
 \beq\label{Y2}
 {\mathcal Y}^{\hbar} _{I,J}=\overrightarrow{\prod_{i\in I}}R_{i,j_1}^{\hbar}  \dots\overrightarrow{\prod_{i\in I}}R_{i,j_l}^{\hbar}\,.
 \eq
  In what follows we use the auxiliary statement based on the Yang-Baxter equation (\ref{QYB2})-(\ref{QYB3}).
  \begin{lemma}\label{lemmaY}
  Let $I=\{i_1<\dots<i_{m-1}<i_m=a<i_{m+1}<\dots <i_k\}$ and $J=\{j_1<\dots<j_{n-1}<j_n=b<j_{n+1}<\dots<j_l\}$ be disjoint sets.
  Then the following relations hold true:
  \beq\label{lemmaY1}
  {\mathcal Y}^{\hbar}_{I,J}R^{\hbar}_{b,j_{n+1}}\dots R^{\hbar}_{b,j_l}=R^{\hbar}_{b,j_{n+1}}\dots R^{\hbar}_{b,j_l}{\mathcal Y}^{\hbar}_{I,J\setminus \{b\}}{\mathcal Y}^{\hbar}_{I,\{b\}}
  \eq
  and
  \beq\label{lemmaY2}
 {\mathcal Y}^{\hbar}_{I,J}R^{\hbar}_{i_{m+1},a}\dots R^{\hbar}_{i_k,a}
 =R^{\hbar}_{i_{m+1},a}\dots R^{\hbar}_{i_k,a}{\mathcal Y}^{\hbar}_{I\setminus \{a\},J}{\mathcal Y}^{\hbar}_{\{a\},J}
 \,.
  \eq
  \end{lemma}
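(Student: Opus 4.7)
The plan is to establish (\ref{lemmaY1}) by a twofold use of Yang-Baxter, and then to deduce (\ref{lemmaY2}) by a completely symmetric argument in which the roles of the row and column indices are exchanged. The key advantage of writing ${\mathcal Y}^{\hbar}_{I,J}$ in the form (\ref{Y1}) as a product of ``rows''
\[
Y_s=R^{\hbar}_{i_s,j_1}R^{\hbar}_{i_s,j_2}\cdots R^{\hbar}_{i_s,j_l},\qquad s=1,\ldots,k,
\]
is that a given row shares only the single index $i_s$ with each of the R-matrices $R^{\hbar}_{b,j_{n+1}},\ldots,R^{\hbar}_{b,j_l}$ standing to its right, so the Yang-Baxter equation (\ref{QYB2}) applies directly with no interference from the other rows.

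First I would prove the following single-row identity (writing $Y_s^{J'}=\overrightarrow{\prod_{j\in J'}}R^{\hbar}_{i_s,j}$):
\[
Y_s\cdot R^{\hbar}_{b,j_{n+1}}\cdots R^{\hbar}_{b,j_l}=R^{\hbar}_{b,j_{n+1}}\cdots R^{\hbar}_{b,j_l}\cdot Y_s^{J\setminus\{b\}}\cdot R^{\hbar}_{i_s,b}.
\]
I would sweep the factors $R^{\hbar}_{b,j_t}$ leftward one at a time, in the order $t=n+1,n+2,\ldots,l$. At each step $R^{\hbar}_{b,j_t}$ first passes through $R^{\hbar}_{i_s,j_l},\ldots,R^{\hbar}_{i_s,j_{t+1}}$ by (\ref{QYB3}) (all four indices distinct), arriving next to $R^{\hbar}_{i_s,b}R^{\hbar}_{i_s,j_t}$; the Yang-Baxter equation (\ref{QYB2}) then gives
\[
R^{\hbar}_{i_s,b}R^{\hbar}_{i_s,j_t}R^{\hbar}_{b,j_t}=R^{\hbar}_{b,j_t}R^{\hbar}_{i_s,j_t}R^{\hbar}_{i_s,b},
\]
after which $R^{\hbar}_{b,j_t}$ commutes to the left past the remaining $R^{\hbar}_{i_s,j_{t'}}$ with $t'<t$, $t'\neq n$, stopping immediately to the right of the $R^{\hbar}_{b,j_{t'}}$ already extracted. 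After $l-n$ such sweeps the $R^{\hbar}_{i_s,b}$ factor produced by the YBE substitutions is stranded at the far right, while all the $R^{\hbar}_{b,j_t}$ sit on the left in their original order, giving the stated single-row identity.

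Second, I would feed this identity through the product $Y_1Y_2\cdots Y_k={\mathcal Y}^{\hbar}_{I,J}$ from right to left, applying it successively to $Y_k,Y_{k-1},\ldots,Y_1$. After processing $Y_s$, a factor $R^{\hbar}_{i_s,b}$ appears on the extreme right; by (\ref{QYB3}) it commutes past every $Y_{s'}^{J\setminus\{b\}}$ with $s'>s$, since $\{i_s,b\}\cap\{i_{s'},j_t\}=\emptyset$ for each $t\neq n$ (using $I\cap J=\emptyset$ and $b=j_n$). Hence all extracted factors accumulate, in the order $R^{\hbar}_{i_1,b}\cdots R^{\hbar}_{i_k,b}={\mathcal Y}^{\hbar}_{I,\{b\}}$, on the far right; the blocks $Y_s^{J\setminus\{b\}}$ reassemble into ${\mathcal Y}^{\hbar}_{I,J\setminus\{b\}}$; and the $R^{\hbar}_{b,j_t}$ band slides to the far left unchanged, which is exactly (\ref{lemmaY1}).

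For (\ref{lemmaY2}) I would run precisely the same argument on the alternative presentation (\ref{Y2}) as a product of ``columns'' $\overrightarrow{\prod_{i\in I}}R^{\hbar}_{i,j_t}$, using the single-column identity obtained from the above by swapping $(i,j)\leftrightarrow(j,i)$ and exchanging the roles of $I$ and $J$. The main obstacle is purely combinatorial bookkeeping: at each reshuffle one must check that the indices being commuted are disjoint in the sense required by (\ref{QYB3}), and that the newly emitted factors are inserted in the correct order so as to reproduce the precise right-hand sides of (\ref{lemmaY1}) and (\ref{lemmaY2}). Beyond that, no identities other than (\ref{QYB2})-(\ref{QYB3}) are invoked.
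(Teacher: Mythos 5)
Your proposal is correct and essentially coincides with the paper's own proof: your single-row identity is exactly the paper's relation (\ref{Y12}), derived there in the same way by commuting each $R^{\hbar}_{b,j_t}$ leftward via (\ref{QYB3}) with one application of (\ref{QYB2}) per extracted factor, and your subsequent row-by-row application to $Y_k,\ldots,Y_1$ with the factors $R^{\hbar}_{i_s,b}$ accumulating on the right in the order ${\mathcal Y}^{\hbar}_{I,\{b\}}$ reproduces the paper's concluding step. The only difference is cosmetic: the paper proves (\ref{lemmaY1}) explicitly and leaves (\ref{lemmaY2}) implicit, whereas you spell out the symmetric ``column'' argument based on the presentation (\ref{Y2}).
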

  The proof is given in the Appendix.

\paragraph{Spin operators} are defined as follows:
\begin{equation}
\label{Dspin2}
\mathcal{D}_k=
\sum_{|I|=k}\ \prod\limits_{\substack{i\in I^c,\\ j\in I}}\phi(z_{i}-z_j) \cdot\bar{\mathcal{R}}_{I^c,I}\cdot \mathbf{p}_{I}\cdot \bar{\mathcal{R}}_{I,I^c}'\,,\qquad \mathbf{p}_I=\prod_{i\in I}p_{i}\,,
\end{equation}
where $p_i$ are the shift operators (\ref{p_i}) and the bars over $\mR$ mean that in the definitions (\ref{RIJ1}), (\ref{RIJ'1})
we use $R$-matrices normalized as in (\ref{q04})-(\ref{q05}). For $M=1$ the operators (\ref{Dspin2}) reproduce
the Ruijsenaars-Macdonald operators (\ref{Dscalar}). The operators of type (\ref{Dspin2}) were introduced
in \cite{LPS} in trigonometric case with $M=2$ based on the ${\rm U}_q({\rm gl}_2)$ $R$-matrix.

Let us also remark that using the so-called freezing trick
the spin operators (\ref{Dspin2}) provide elliptic integrable long-range spin chains \cite{MZ2}.
In the classical mechanics the operators (\ref{Dscalar})
are the Hamiltonians of the elliptic Ruijsenaars-Schneider model, while the (anisotropic) spin operators are quantum analogues of the Hamiltonians of (relativistic) interacting tops \cite{Z19,GSZ}. Description of relativistic tops can be found in \cite{LOZ14,KZ19}.

Commutativity of $\mathcal{D}_k$ turns out to be equivalent to the following set of $R$-matrix identities:
\begin{equation}\label{relRI}
\sum_{|I|=k}\left(\mathcal{R}_{I^c,I}\cdot\mathcal{R}_{I_-,I^c}'
\cdot\mathcal{R}_{I_-,I^c}\cdot\mathcal{R}_{I^c,I}'-\mathcal{R}_{I,I^c}
\cdot\mathcal{R}_{I^c_-,I}'\cdot\mathcal{R}_{I^c_-,I}\cdot\mathcal{R}_{I,I^c}'\right)=0\,,
\end{equation}
which hold true for any $k=1,\dots ,N$. Here we use the notation $\mathcal{R}_{I_-,J}=\mathbf{p}_I \mathcal{R}_{I,J}\mathbf{p}_I^{-1}$.  Derivation of (\ref{relRI}) as well as its proof and explicit examples can be found in \cite{MZ1}. For $M=1$ (\ref{relRI}) turn into identities (\ref{IdenRuij}).

\section{New R-matrix identities}
\setcounter{equation}{0}

\subsection{Statement}
It is convenient to define a set of $2N$ complex variables $z_1,...,z_{2N}$, which unifies $\bf x$ and $\bf y$:
 \begin{equation}
 \label{e51}
 \begin{array}{c}
z_i=
 \left\{\begin{array}{l}
\displaystyle{
x_i\quad \hbox{for}\ i\in{\mathcal N}_1\,,\quad {\mathcal N}_1=\{1,\ldots,N\}\,,
  }
  \\ \ \\
 \displaystyle{
y_i\quad \hbox{for}\ i\in{\mathcal N}_2\,,\quad {\mathcal N}_2=\{N+1,\ldots,2N\}\,,
  }
  \end{array}
  \right.
  \end{array}
\end{equation}
The $R$-matrices act on the space $\mH=(\mC^M)^{\otimes 2N}$, and the following short notations are assumed:
 \begin{equation}
 \label{e52}
 \begin{array}{c}
 R^\hbar_{ij}=
 \left\{\begin{array}{l}
\displaystyle{
R^\hbar_{ij}(x_i-x_j)\,,\quad \hbox{if}\ i,j\in {\mathcal N}_1\,,
  }
  \\ \ \\
 \displaystyle{
R^\hbar_{ij}(x_i-y_j)\,,\quad \hbox{if}\ i\in {\mathcal N}_1\,,\ j\in {\mathcal N}_2\,,
  }
    \\ \ \\
 \displaystyle{
R^\hbar_{ij}(y_i-x_j)\,,\quad \hbox{if}\ i\in {\mathcal N}_2\,,\ j\in {\mathcal N}_1\,,
  }
   \\ \ \\
 \displaystyle{
R^\hbar_{ij}(y_i-y_j)\,,\quad \hbox{if}\ i,j\in {\mathcal N}_2\,.
  }
    \end{array}
  \right.
  \end{array}
\end{equation}

\paragraph{Theorem}
{\em
The following identities hold for any $k=1,\dots ,N$:
\begin{equation}
\label{e53}
\begin{array}{lll}
 \displaystyle \sum\limits_{\substack{I\subset {\mathcal N}_1\\|I|=k}} \mathcal{R}_{I,I^c}^{h}{\mathcal Y}^{\hbar}_{\mathcal N_2,I}\mathcal{R'}_{I,I^c}^{\hbar}+
 (-1)^{k-1}\displaystyle \sum\limits_{\substack{J\subset {\mathcal N}_2\\|J|=k}} \mathcal{R}_{J,J^c}^{-\hbar} {\mathcal Y}^{-\hbar}_{\mathcal N_1,J}\mathcal{  R'}_{J,J^c}^{-\hbar}=0\,,
\end{array}
 \end{equation}
 where $R_{ij}=R^\hbar_{ij}(z_i-z_j)$ is the elliptic $R$-matrix (\ref{BB}) acting in the $i$-th and $j$-th tensor components, $i,j=1,...,2N$ according to the rule\footnote{In the first sum of (\ref{e53}) the set $I^c$ is a complement of $I$ in ${\mathcal N}_1$. Similarly, in the second sum of (\ref{e53}) the set $J^c$ is a complement of $J$ in ${\mathcal N}_2$.} (\ref{e51})-(\ref{e52}).
 In the $M=1$ case the $R$-matrix identities (\ref{e53}) turn into (\ref{e01}).
 }

 The idea of the proof is as follows. Denote the l.h.s. of (\ref{e53}) as $\mF[k,N]=\mF_1[k,N]-\mF_2[k,N]$, where
 $\mF_1[k,N],\mF_2[k,N]$ are the first and the second sum in (\ref{e53}) respectively\footnote{Sometimes we omit the indices $k, N$ and write $\mF$ instead of $\mF[k,N]$.} :
  \begin{equation}
\label{e531}
\begin{array}{lll}
 \displaystyle
  \mF_1[k,N]=\sum\limits_{\substack{I\subset {\mathcal N}_1\\|I|=k}} \mathcal{R}_{I,I^c}^{h}{\mathcal Y}^{\hbar}_{\mathcal N_2,I} \mathcal{R'}_{I,I^c}^{\hbar}
  \,,
\end{array}
 \end{equation}
  \begin{equation}
\label{e532}
\begin{array}{lll}
 \displaystyle
 \mF_2[k,N]=(-1)^k\sum\limits_{\substack{J\subset {\mathcal N}_2\\|J|=k}} \mathcal{R}_{J,J^c}^{-\hbar} {\mathcal Y}^{-\hbar}_{\mathcal N_1,J} \mathcal{  R'}_{J,J^c}^{-\hbar}
 \,.
\end{array}
 \end{equation}
 More explicitly,
 \beq\label{e534}
  \begin{array}{c}
  \displaystyle{
   {\mathcal F}_{1}[k,N]=  \sum\limits_{1\leq i_1<...<i_k\leq N}\left(
   \overrightarrow{\prod\limits_{l_k=i_k+1}^N} R_{i_k l_k}^{\hbar}
   \overrightarrow{\prod\limits^N_{\hbox{\tiny{$ \begin{array}{c}{ l_{k-1}\!=\!i_{k-1}\!+\!1 }\\{ l_{k-1}\!\neq\! i_k } \end{array}$}}}}R_{i_{k-1} l_{k-1}}^{\hbar}\right.
      \ \ldots\
   \overrightarrow{\prod\limits^N_{\hbox{\tiny{$ \begin{array}{c}{ l_{1}\!=\!i_{1}\!+\!1 }\\{ l_{1}\!\neq\! i_2...i_k } \end{array}$}}}}R_{i_{1} l_{1}}^{\hbar}
   \times
   }
   \end{array}
 \eq
   $$
   \begin{array}{c}
     \displaystyle{
 \times
 \overrightarrow{\prod\limits^{2N}_{j_1=N+1}} R_{j_1i_1}^{\hbar}
 \overrightarrow{\prod\limits^{2N}_{j_2=N+1}} R_{j_2i_2}^{\hbar}
 \ \ldots\
 \overrightarrow{\prod\limits^{2N}_{j_k=N+1}} R_{j_ki_k}^{\hbar}
  \times
    }
   \\ \ \\
     \displaystyle{
       \times \left.
   \overrightarrow{\prod\limits^{i_k-1}_{\hbox{\tiny{$ \begin{array}{c}{ m_k\!=\!1 }\\{ m_k\!\neq\! i_{1}...i_{k-1}} \end{array}$}}}}R_{i_k m_k}^{\hbar}
   \overrightarrow{\prod\limits^{i_{k-1}-1}_{\hbox{\tiny{$ \begin{array}{c}{ m_{k-1}\!=\!1 }\\{ m_{k-1}\!\neq\! i_{1}...i_{k-2}} \end{array}$}}}}R_{i_{k-1} m_{k-1}}^{\hbar}
      \ \ldots\
  \overrightarrow{\prod\limits^{i_{1}-1}_ {m_1=1}} R_{i_{1} m_{1}}^{\hbar}\right)\,.
 }
 \end{array}
 $$
 Let us also rewrite expression (\ref{e532}) through $R$-matrices depending on $\hbar$ (but not on $-\hbar$) using the skew-symmetry property (\ref{r08}):
 \beq\label{e535}
  \begin{array}{c}
  \displaystyle{
   {\mathcal F}_{2}[k,N]= \sum\limits_{N+1\leq j_1<...<j_k\leq 2N}\left(
   \overrightarrow{\prod\limits_{l_k=j_k+1}^{2N}} R_{l_k j_k }^{\hbar}
   \overrightarrow{\prod\limits^{2N}_{\hbox{\tiny{$ \begin{array}{c}{ l_{k-1}\!=\!j_{k-1}\!+\!1 }\\{ l_{k-1}\!\neq\! j_k } \end{array}$}}}}R_{l_{k-1}j_{k-1} }^{\hbar}\right.
      \ \ldots\
   \overrightarrow{\prod\limits^{2N}_{\hbox{\tiny{$ \begin{array}{c}{ l_{1}\!=\!j_{1}\!+\!1 }\\{ l_{1}\!\neq\! j_2...j_k } \end{array}$}}}}R_{l_{1}j_{1} }^{\hbar}
   \times
   }
   \end{array}
 \eq
   $$
   \begin{array}{c}
     \displaystyle{
 \times
 \overrightarrow{\prod\limits^{N}_{i_1=1}} R_{j_1i_1}^{\hbar}
 \overrightarrow{\prod\limits^{N}_{i_2=1}} R_{j_2i_2}^{\hbar}
 \ \ldots\
 \overrightarrow{\prod\limits^{N}_{i_k=1}} R_{j_ki_k}^{\hbar}
  \times
    }
   \\ \ \\
     \displaystyle{
       \times \left.
   \overrightarrow{\prod\limits^{j_k-1}_{\hbox{\tiny{$ \begin{array}{c}{ m_k\!=\!N+1 }\\{ m_k\!\neq\! j_{1}...j_{k-1}} \end{array}$}}}}R_{ m_k j_k}^{\hbar}
   \overrightarrow{\prod\limits^{j_{k-1}-1}_{\hbox{\tiny{$ \begin{array}{c}{ m_{k-1}\!=\!N+1 }\\{ m_{k-1}\!\neq\! j_{1}...j_{k-2}} \end{array}$}}}}R_{m_{k-1}j_{k-1} }^{\hbar}
      \ \ldots\
  \overrightarrow{\prod\limits^{j_{1}-1}_ {m_1=N+1}} R_{ m_{1}j_{1}}^{\hbar}\right)\,.
 }
 \end{array}
 $$
 In this way (\ref{e53}) takes the form (\ref{e021}).

  Consider $\mF$ as (matrix valued) function of $z_1=x_1$. Our strategy is to use Lemma \ref{Lem41} from the Appendix similarly to the proof of (\ref{e01}) in \cite{Ruij2006}. We are going to show that $\mF$ is an entire function of $z_1$. The behaviour of $R$-matrices on the lattice of periods $\mZ\oplus\tau\mZ$ is nontrivial. It is given by (\ref{r721}). For this reason we use the approach from \cite{MZ1}. Namely, we consider $\mF(z_1)$
 on the ''large'' torus with periods $M$ and $M\tau$. Due to (\ref{a041}) $Q^M=\Lambda^M=1_M$ the quasi-periodic behaviour of $R$-matrices (and $\mF$) on the large torus is simplified and we may use Lemma \ref{Lem41}.
 First, we are going to prove the absence of poles (in the variable $z_1$) in the fundamental parallelogram
 $0\leq z_1<1$, $0\leq z_1<\tau$ and then extend the proof to the large torus.

The proof is performed by induction in $k$ and $N$. For $k=1$ the identity (\ref{e51}) can be deduced from the higher order $R$-matrix identities based on the associative Yang-Baxter equation.

\subsection{Proof of the Theorem for $k=1$}
The proof of (\ref{e02}) can be performed straightforwardly using (\ref{x1}), which is a higher order generalization of the addition formula (\ref{Fay}).
 Details can be found in the
Appendix of \cite{ZZ2}. Here we prove (\ref{e53}) for $k=1$ in a similar way. For this purpose we recall that
the elliptic $R$-matrix in the fundamental representation (\ref{BB}) satisfies the so-called associative Yang-Baxter equation \cite{FK}:
\beq\label{AYBE}
\begin{array}{c}
    R^{u}_{12} R^{u'}_{23} = R^{u'}_{13} R^{u-u'}_{12} + R^{u'-u}_{23} R^{u}_{13}, \quad R^\hbar_{ab} = R^\hbar_{ab}(z_a-z_b)\,.
\end{array}
\eq
In \cite{Z18} the higher order analogue of (\ref{AYBE}) was derived. Here we use it in a slightly different form suggested in \cite{MZ1}:
 \beq\label{a4}
  \begin{array}{c}
  \displaystyle{
 \overrightarrow{\prod\limits_{i=1}^n}
 R_{a,i}^{u_i}(w_i)=
 \sum\limits_{m=1}^n
 \overrightarrow{\prod\limits_{j=m+1}^{n}}
 R_{m,j}^{u_j}(w_j-w_{m})
 \cdot R_{a,m}^U(w_m)\cdot
  \overrightarrow{\prod\limits_{j=1}^{m-1}}
 R_{m,j}^{u_j}(w_j-w_{m})\,,
 }
 \end{array}
 \eq
where $U=\sum\limits_{k=1}^n u_k$ and $a\notin\{1...n\}$, $n\in\mZ_+$. Let us fix $a=1$
together with replacing the set $\{1...n\}$ with $\{2...n+1\}$. Then (\ref{a4}) takes the form
 \beq\label{e54}
  \begin{array}{c}
  \displaystyle{
 \overrightarrow{\prod\limits_{i=2}^{n+1}}
 R_{1,i}^{u_i}(w_i)=
 \sum\limits_{m=2}^{n+1}
 \overrightarrow{\prod\limits_{j=m+1}^{n+1}}
 R_{m,j}^{u_j}(w_j-w_{m})
 \cdot R_{1,m}^U(w_m)\cdot
  \overrightarrow{\prod\limits_{j=2}^{m-1}}
 R_{m,j}^{u_j}(w_j-w_{m})\,.
 }
 \end{array}
 \eq
Set $n=2N-1$ and identify $w_i=z_1-z_i$ for $i=2,...,2N$ (recall (\ref{e51})-(\ref{e52})).
Also, put $u_2=...=u_N=\hbar$ and $u_{N+1}=...=u_{2N}=-\hbar$. Then $U=-\hbar$.
After all substitutions in the l.h.s. of (\ref{e54}) one gets $R_{12}^\hbar...R^\hbar_{1N}\cdot R^{-\hbar}_{1,N+1}...R^{-\hbar}_{1,2N}$. Writing down the r.h.s. of (\ref{e54}) finally yields
 \beq\label{e55}
  \begin{array}{c}
  \displaystyle{
  \sum\limits_{i=1}^N R^\hbar_{i,i+1}...R^\hbar_{i,N}\cdot
  R^\hbar_{N+1,i}...R^\hbar_{2N,i}\cdot R^\hbar_{i,1}...R^\hbar_{i,i-1}
  -
  }
  \\ \ \\
 \displaystyle{
  -
  \sum\limits_{i=1}^N
  R^\hbar_{N+i+1,N+i}...R^\hbar_{2N,N+i} \cdot
  R^\hbar_{N+i,1}...R^\hbar_{N+i,N}\cdot
  R^\hbar_{N+1,N+i}...R^\hbar_{N+i-1,N+i}=0\,,
 }
 \end{array}
 \eq
where we used the skew-symmetry of $R$-matrices (\ref{r08}). The obtained relation (\ref{e55}) is (\ref{e53}) for $k=1$.

\subsection{Auxiliary lemmas}
Here we formulate two lemmas useful for the proof of the Theorem. The proof of Lemmas is given in the Appendix.

\begin{lemma}\label{Lem31}
Consider $\mF=\mF[k,N]$ as function of $z_a=x_a$. It has no poles at $x_1,\ldots x_{a-1},x_{a+1}\ldots,x_N$.
\end{lemma}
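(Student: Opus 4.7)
The plan is to isolate the summands that could produce poles at $x_a = x_b$, pair them up, and exhibit a residue cancellation driven by the fact that $\mathrm{Res}_{z=0} R^{\hbar}_{ab}(z) = P_{ab}$, the permutation of the $a$-th and $b$-th tensor factors.

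First I would note that $\mF_2[k,N]$ written in the form (\ref{e535}) contains only $R$-matrices whose arguments are of the form $y_i - y_j$ or $y_i - x_j$, so as a function of $x_a$ it is trivially regular at $x_b$ for every $b \in {\mathcal N}_1 \setminus \{a\}$. Inside each summand of $\mF_1[k,N]$ the central block $\mathcal{Y}^{\hbar}_{{\mathcal N}_2, I}$ involves only $R$-matrices with arguments $y_j - x_i$ and is also regular at such points. Therefore the $(x_i-x_j)$-dependent factors live exclusively in the outer blocks $\mathcal{R}^{\hbar}_{I,I^c}$ and $\mathcal{R}'^{\hbar}_{I,I^c}$, and among those only the ones for which $a$ and $b$ lie on opposite sides of the partition $I \sqcup I^c$ can contribute a pole at $x_a=x_b$.

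Next, for each such subset $I$ (say $a\in I$, $b\in I^c$) I would pair it with $I':=(I\setminus\{a\})\cup\{b\}$. Taking without loss of generality $a<b$, the $I$-summand carries its simple pole from $R^{\hbar}_{ab}(x_a-x_b)$ sitting inside $\mathcal{R}^{\hbar}_{I,I^c}$ with residue $+P_{ab}$ (times the remaining factors), while the $I'$-summand carries its pole from $R^{\hbar}_{ba}(x_b-x_a)$ sitting inside $\mathcal{R}'^{\hbar}_{I',(I')^c}$ with residue $-P_{ab}$, the extra sign coming from differentiating the argument $x_b-x_a$ in $x_a$. The two residues are thus set up with opposite signs for cancellation.

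Finally, I would verify the cancellation by conjugating the remaining $R$-matrices by $P_{ab}$, using the intertwining identities $P_{ab} R^{\hbar}_{ac}(u)=R^{\hbar}_{bc}(u)P_{ab}$ and $P_{ab} R^{\hbar}_{ca}(u)=R^{\hbar}_{cb}(u)P_{ab}$ together with the Yang-Baxter equations (\ref{QYB2})-(\ref{QYB3}) and the reorderings provided by Lemma \ref{lemmaY}. Specialising $x_a=x_b$ in the surviving factors is expected to make the conjugated $I$-residue coincide with the $I'$-residue (up to the sign above), so that the two residues sum to zero. The main obstacle I anticipate is the combinatorial bookkeeping: the outer blocks $\mathcal{R}^{\hbar}_{I,I^c}$ and $\mathcal{R}'^{\hbar}_{I,I^c}$ have nested, ordering-sensitive structures (cf.\ the explicit expansion (\ref{e534})), so threading $P_{ab}$ from its point of extraction across the central block $\mathcal{Y}^{\hbar}_{{\mathcal N}_2, I}$ and matching the result with $\mathcal{R}^{\hbar}_{I',(I')^c}\,\mathcal{Y}^{\hbar}_{{\mathcal N}_2, I'}\,\mathcal{R}'^{\hbar}_{I',(I')^c}$ evaluated at $x_a=x_b$ demands careful iterated use of Lemma \ref{lemmaY}.
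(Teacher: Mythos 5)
Your proposal reproduces the paper's own proof in all essentials: you correctly observe that $\mF_2$ is regular and the central blocks ${\mathcal Y}^{\hbar}_{{\mathcal N}_2,I}$ contribute no $(x_a-x_b)$-poles, you pair the summand with $a\in I$, $b\in I^c$ against $I'=(I\setminus\{a\})\cup\{b\}$ (the paper's $I=J\cup\{a\}$ versus $J\cup\{b\}$ in (\ref{e571})), you get the residues $+P_{ab}$ and $-P_{ab}$ with the correct sign from the argument $x_b-x_a$, and you cancel them by threading $P_{ab}$ through via the intertwining relations, the Yang--Baxter equations and Lemma \ref{lemmaY}, exactly as the paper does. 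The one ingredient you leave implicit in the anticipated bookkeeping is the unitarity property (\ref{q03}), which the paper needs to collapse the leftover ordered products such as $R^{\hbar}_{a,a+1}\ldots R^{\hbar}_{a,b-1}$ against $R^{\hbar}_{b-1,a}\ldots R^{\hbar}_{a+1,a}$ into the common scalar factor $\prod_{j=a+1}^{b-1}\phi(\hbar,z_a-z_j)\,\phi(\hbar,z_j-z_a)$ appearing on both sides of (\ref{e571}).
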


\begin{lemma}\label{Lem32}
Consider $\mF=\mF[k,N]$ as function of $z_a=x_a$. The following relation holds for any $b\in{\mathcal N}_2$:
 \beq\label{e58}
  \begin{array}{c}
  \displaystyle{
  \res\limits_{z_a=z_{b}}\mF[k,N]=-G\cdot \mF[k-1,N-1]\cdot H\,,\quad b=N+1,\ldots,2N\,,
 }
 \end{array}
 \eq
where
 \beq\label{e59}
  \begin{array}{c}
  \displaystyle{
  G=R^\hbar_{a,a+1}\ldots R^\hbar_{a,N} \cdot R^\hbar_{b+1,b}\ldots R^\hbar_{2N,b}\,,
 }
 \end{array}
 \eq
 \beq\label{e60}
  \begin{array}{c}
  \displaystyle{
  H=R^\hbar_{N+1,a}\ldots R_{b-1,a}^\hbar\cdot R^\hbar_{b,1}\ldots R^\hbar_{b,a-1}P_{ab}
 }
 \end{array}
 \eq
and by $\mF[k-1,N-1]$ we mean the expression $\mF=\mF_1-\mF_2$ (\ref{e531})-(\ref{e532}) taken for $k-1$ and
depending on $2(N-1)$ variables $\{x_1,...,x_N\}\setminus\{x_a\}$, $\{y_1,...,y_N\}\setminus\{y_b\}$.
\end{lemma}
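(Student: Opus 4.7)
The plan is to pinpoint the $R$-matrices inside $\mathcal{F}_1[k,N]$ and $\mathcal{F}_2[k,N]$ that can produce a pole at $z_a=z_b$, isolate them via Yang--Baxter moves, extract the residue using the fact that $R^\hbar_{ij}(z)$ has a simple pole at $z=0$ with residue proportional to $P_{ij}$, and recognize the surviving expression as $-G\cdot\mathcal{F}[k-1,N-1]\cdot H$. The residue computation is then fed directly into an induction on $k$ and $N$, and the structure of the formula is dictated by that induction: the middle factor is the very same object $\mathcal{F}[k-1,N-1]$ on the reduced variable set obtained by deleting $x_a$ and $y_b$.

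The only factor of $\mathcal{F}_1[k,N]$ depending on $x_a-y_b$ is $R^\hbar_{b,a}(y_b-x_a)$ inside $\mathcal{Y}^\hbar_{\mathcal{N}_2,I}$, and it is present if and only if $a\in I$. Similarly, the only factor of $\mathcal{F}_2[k,N]$ depending on $x_a-y_b$ is $R^{-\hbar}_{a,b}(x_a-y_b)$ inside $\mathcal{Y}^{-\hbar}_{\mathcal{N}_1,J}$, which by the convention following (\ref{r08}) equals $R^\hbar_{b,a}(y_b-x_a)$, and it is present if and only if $b\in J$. Hence the residue at $z_a=z_b$ receives contributions only from subsets $I\ni a$ in the $\mathcal{F}_1$-sum and $J\ni b$ in the $\mathcal{F}_2$-sum; these are labelled by the reduced data $I\setminus\{a\}\subset\mathcal{N}_1\setminus\{a\}$ and $J\setminus\{b\}\subset\mathcal{N}_2\setminus\{b\}$, which is precisely the index set of the summands of $\mathcal{F}_1[k-1,N-1]$ and $\mathcal{F}_2[k-1,N-1]$ on the reduced variable set.

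For a fixed $I\ni a$, iterated use of Lemma \ref{lemmaY} brings $R^\hbar_{b,a}$ to a chosen isolated position inside $\mathcal{R}^\hbar_{I,I^c}\,\mathcal{Y}^\hbar_{\mathcal{N}_2,I}\,\mathcal{R}'^\hbar_{I,I^c}$. Replacing it by a multiple of $P_{ab}$ and then commuting $P_{ab}$ to the rightmost slot through the remaining $R$-matrices via the identities $R^\hbar_{a,c}P_{ab}=P_{ab}R^\hbar_{b,c}$ and $R^\hbar_{c,a}P_{ab}=P_{ab}R^\hbar_{c,b}$ has two simultaneous effects: it trades every remaining $a$-labelled $R$-matrix for a $b$-labelled one and vice versa, and it clusters all $R$-matrices with a ``later'' neighbouring index on the left (assembling them into $G$) and those with an ``earlier'' neighbouring index on the right (assembling them into $H$, which inherits $P_{ab}$ at its tail). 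The middle piece is then exactly the summand of $\mathcal{F}_1[k-1,N-1]$ indexed by $I\setminus\{a\}$. A mirror computation on the $\mathcal{F}_2$ side produces, between the same $G$ and $H$, the summand of $\mathcal{F}_2[k-1,N-1]$ indexed by $J\setminus\{b\}$. Summing over $I\ni a$ and $J\ni b$ reconstitutes $\mathcal{F}[k-1,N-1]=\mathcal{F}_1[k-1,N-1]-\mathcal{F}_2[k-1,N-1]$ in the middle, and the overall minus sign is produced by the combination of the $(-1)^k$ in the definition of $\mathcal{F}_2$ with the sign from $\lim_{x_a\to y_b}(x_a-y_b)\,R^\hbar_{b,a}(y_b-x_a)=-\res_{w=0}R^\hbar(w)\propto -P_{ab}$.

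The main obstacle is the combinatorial bookkeeping of the Yang--Baxter moves: for each $I\ni a$ one must verify that the rearrangement producing $G$ on the left, $H$ on the right, and the $(I\setminus\{a\})$-summand of $\mathcal{F}_1[k-1,N-1]$ in the middle is legitimate (every Yang--Baxter move must involve three distinct indices) and that, after stripping from $\mathcal{R}^\hbar_{I,I^c}$ and $\mathcal{R}'^\hbar_{I,I^c}$ the factors absorbed into $G$ and $H$, what remains is exactly the $\mathcal{R}$- and $\mathcal{R}'$-products that appear in the summand of $\mathcal{F}_1[k-1,N-1]$ labelled by $I\setminus\{a\}$ on the reduced variable set. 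The arrow orderings dictated by (\ref{RIJ1})--(\ref{RIJ'2}) and (\ref{Y1})--(\ref{Y2}), together with the sign tracking, are the bulk of the work; the $k=1$ base case already treated in Section 3.2 provides a useful consistency check for the signs.
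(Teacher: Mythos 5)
Your proposal follows essentially the same route as the paper's proof: you locate the unique pole-bearing factor $R^\hbar_{b,a}$ inside $\mathcal{Y}^\hbar_{\mathcal{N}_2,I}$ for $a\in I$ (resp.\ inside $\mathcal{Y}^{-\hbar}_{\mathcal{N}_1,J}$ for $b\in J$), isolate it with Lemma \ref{lemmaY} and relations (\ref{l21})--(\ref{l22}), replace it by the permutation via (\ref{r05}), push $P_{ab}$ to the right so that $a$-labels trade for $b$-labels, and assemble $G$ on the left, the $(I\setminus\{a\})$-summand of $\mathcal{F}[k-1,N-1]$ in the middle, and $H$ (with $P_{ab}$ at its tail) on the right, exactly as in (\ref{e601})--(\ref{e608}). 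The only point you gloss that the paper makes explicit is the sign bookkeeping of the ``mirror'' $\mathcal{F}_2$ computation, which first yields $G'$ and $H'$ built from $R^{-\hbar}$ and is converted to the \emph{same} $G$ and $H$ only after the skew-symmetry counts $G'=(-1)^{a+b-N}G$, $H'=(-1)^{a+b-N-1}H$ of (\ref{e612})--(\ref{e613}), whose product supplies the needed overall minus --- a detail consistent with, and covered by, the sign-tracking work you explicitly reserve.
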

The latter statement is applied to the proof of the Theorem by setting $a=1$.

\subsection{Proof of the Theorem for $k>1$}
Let us prove the main Theorem (\ref{e53}).

\begin{proof}
The proof is by induction in $k$. For $k=1$ the statement was proved above in (\ref{e55}).
Suppose (\ref{e53}) is true for $k-1$. We need to prove it for $k$.

Main idea is the same as in the scalar case. We are going to use Lemma \ref{Lem41} given in the Appendix. For this purpose
consider the quasi-periodic properties of $\mF$ as function of $z_1$. Due to (\ref{r721}) any $R$-matrix of the form $R^\hbar_{1k}=R^\hbar_{1k}(z_1,z_k)=R^\hbar_{1k}(z_1-z_k)$, $k=2,\ldots,2N$ is transformed as follows:
 \beq\label{e61}
  \begin{array}{c}
  \displaystyle{
  R^\hbar_{1k}(z_1+1,z_k)=Q_1^{-1}R_{1k}^\hbar(z_1,z_k)Q_1\,,
 }
 \\ \ \\
  \displaystyle{
 R^\hbar_{1k}(z_1+\tau,z_k)=\exp(-2\pi\imath\frac{\hbar}{M})\,\Lambda_1^{-1}R_{1k}^\hbar(z_1,z_k)\Lambda_1\,,
  }
 \end{array}
 \eq
where $Q_1=Q\otimes 1_M\otimes\ldots\otimes 1_M=Q\otimes 1_M^{\otimes(2N-1)}$ and similarly for $\Lambda_1$. For $R$-matrices of the form $R^\hbar_{k1}=R^\hbar_{k1}(z_k,z_1)=R^\hbar_{k1}(z_k-z_1)$ we have:
 \beq\label{e62}
  \begin{array}{c}
  \displaystyle{
  R^\hbar_{k1}(z_k,z_1+1)=R^\hbar_{k1}(z_k-1,z_1)=Q_k R_{k1}^\hbar(z_1)Q_k^{-1}=Q_1^{-1}R_{12}^\hbar(z_k,z_1)Q_1
 }
 \end{array}
 \eq
and in the same way
 \beq\label{e63}
  \begin{array}{c}
  \displaystyle{
 R^\hbar_{k1}(z_k,z_1+\tau)=\exp(2\pi\imath\frac{\hbar}{M})\,\Lambda_1^{-1}R_{k1}^\hbar(z_k,z_1)\Lambda_1\,,
  }
 \end{array}
 \eq
Thus, any $R$-matrix which contains index $1$ is transformed as (\ref{e61}) or (\ref{e62})-(\ref{e63}). The number of
$R$-matrices of the forms $R^\hbar_{1k}$ or $R^\hbar_{k1}$ is different in different terms. But the difference of these numbers is the same for all the terms: the number of
$R$-matrices of the form $R^\hbar_{1k}$ is less than those of the form $R^\hbar_{k1}$ by $k$. Therefore,
 \beq\label{e64}
  \begin{array}{c}
  \displaystyle{
  \mF(z_1+1)=Q_1^{-1}\mF(z_1)Q_1\,,
 }
 \\ \ \\
  \displaystyle{
 \mF(z_1+\tau)=\exp(2\pi\imath\frac{k\hbar}{M})\,\Lambda_1^{-1}\mF(z_1)\Lambda_1\,.
  }
 \end{array}
 \eq
In order to use Lemma \ref{Lem41} let us extend the torus to the large one with periods $M$, $M\tau$. Then
due to $Q^M=\Lambda^M=1_M$ we have
 \beq\label{e65}
  \begin{array}{c}
  \displaystyle{
  \mF(z_1+M)=\mF(z_1)\,,
 }
 \\ \ \\
  \displaystyle{
 \mF(z_1+M\tau)=\exp(2\pi\imath\, k\hbar)\mF(z_1)\,.
  }
 \end{array}
 \eq
Next, notice that $\mF(z_1)$ is an entire function of $z_1$ on the fundamental parallelogram ($0\leq z_1<1$, $0\leq z_1<\tau$). Indeed, due to (\ref{r05}) all possible poles
inside the fundamental parallelogram are
at points $x_2,\ldots,x_N$ and $y_1,\ldots,y_N$. By Lemma \ref{Lem31} the poles at points $x_2,\ldots,x_N$ are absent. The absence of poles $y_1,\ldots,y_N$ easily follows from Lemma \ref{Lem32} and the induction assumption
(which implies $\mF[k-1,N-1]=0$). Finally, by taking residue $\res\limits_{z_1=z_k}$, $k=2,\ldots,2N$ of both sides of (\ref{e64}) we conclude that the absence of poles in the fundamental parallelogram is extended to the absence of poles on the large torus. Then all requirements of Lemma \ref{Lem41} are fulfilled and the Theorem is proved.
\end{proof}

\subsection{Examples}
Let us give a few explicit examples for the identities (\ref{e021}).
\paragraph{Example. $N=2,\ k=1$:}
  \beq\label{e70}
R_{12}^\hbar R_{31}^{\hbar} R_{41}^{\hbar}+R_{32}^{\hbar} R_{42}^{\hbar}R_{21}^\hbar-R_{43}^{\hbar} R_{31}^{\hbar}R_{32}^{\hbar}- R_{41}^{\hbar}R_{42}^{\hbar} R_{34}^{\hbar}=0\,.
 \eq
 \paragraph{Example. $N=3,\ k=1$:}
 \beq\label{e71}
  \begin{array}{lll}
 R_{12}^\hbar  R_{13}^\hbar  R_{41}^{\hbar} R_{51}^{\hbar} R_{61}^{\hbar}+ R_{23}^\hbar   R_{42}^{\hbar} R_{52}^{\hbar} R_{62}^{\hbar} R_{21}^\hbar+ R_{43}^{\hbar} R_{53}^{\hbar} R_{63}^{\hbar} R_{31}^\hbar R_{32}^\hbar\\
 \ \\
 -R_{54}^{\hbar} R_{64}^{\hbar} R_{41}^{\hbar} R_{42}^{\hbar} R_{43}^{\hbar}   -R_{65}^{\hbar} R_{51}^{\hbar} R_{52}^{\hbar} R_{53}^{\hbar}   R_{45}^{\hbar} -R_{61}^{\hbar} R_{62}^\hbar R_{63}^{\hbar} R_{46}^{\hbar} R_{56}^{\hbar} =0 \,.
 \end{array}
 \eq
  \paragraph{Example. $N=3,\ k=2$:}
 \beq\label{e72}
  \begin{array}{lll}
 R_{23}^\hbar  R_{13}^\hbar R_{41}^{\hbar} R_{51}^{\hbar} R_{61}^{\hbar} R_{42}^{\hbar} R_{52}^{\hbar} R_{62}^{\hbar}+ R_{12}^\hbar   R_{41}^{\hbar} R_{51}^{\hbar} R_{61}^{\hbar}  R_{43}^{\hbar} R_{53}^{\hbar} R_{63}^{\hbar} R_{32}^\hbar
 \\ \ \\
 + R_{42}^{\hbar} R_{52}^{\hbar} R_{62}^{\hbar}  R_{43}^{\hbar} R_{53}^{\hbar} R_{63}^{\hbar}R_{31}^\hbar R_{21}^\hbar
 -R_{65}^{\hbar} R_{64}^{\hbar} R_{41}^{\hbar}R_{42}^{\hbar} R_{43}^{\hbar} R_{51}^{\hbar} R_{52}^\hbar R_{53}^\hbar
 \\ \ \\
 -R_{54}^{\hbar} R_{41}^{\hbar} R_{42}^{\hbar} R_{43}^{\hbar}R_{61}^{\hbar} R_{62}^{\hbar} R_{63}^{\hbar}  R_{56}^{\hbar} -R_{51}^\hbar R_{52}^\hbar R_{53}^{\hbar} R_{61}^{\hbar} R_{62}^{\hbar} R_{63}^{\hbar} R_{46}^{\hbar} R_{45}^{\hbar} =0 \,.
 \end{array}
 \eq

Let us make a comment on the trigonometric and rational degenerations. All trigonometric and rational $R$-matrices
considered in \cite{MZ1,MZ2} satisfy the obtained set of identities. Details will be given in a separate publication.

Another remark is that in trigonometric and rational cases the scalar identities (\ref{e01}) can be extended to the case when the number of variables ${\bf x}$ is different from the number of variables ${\bf y}$, so that we have $|{\bf x}|=N_1$, $|{\bf y}|=N_2$ and $N_1\neq N_2$. This can be explained as follows. In the trigonometric or rational case one can consider a limit $x_i\rightarrow\infty$, which decrease the number of variables ${\bf x}$ by one. Similar arguments work for $R$-matrix identities if the limit $\lim\limits_{x_i\rightarrow \infty}R_{ij}(x_i-x_j)$
(or $\lim\limits_{x_i\rightarrow \infty}R_{i,N+j}(x_i-y_j)$) is well defined. It is not true for all possible degenerations of the elliptic $R$-matrix. At the same time it is true for a number of widely known cases. For example, it is obviously true for the rational Yang's $R$-matrix $R_{12}^\hbar(z)=1_M\otimes1_M\hbar^{-1}+P_{12}z^{-1}$. Then $\lim\limits_{z\rightarrow \infty}R^\hbar_{12}(z)=1_M\otimes1_M\hbar^{-1}$. Therefore, the limit $x_i\rightarrow\infty$ being applied to some identity provides another identity, where all $R$-matrices containing $x_i$ are replaced by the scalar factor $\hbar^{-1}$. We will consider these cases in our future works.

\section{Appendix A: elliptic functions}\label{sectA}
\def\theequation{A.\arabic{equation}}
\setcounter{equation}{0}

\paragraph{Elliptic functions.} We use the following theta-function:
\beq\label{a0963}\begin{array}{c} \displaystyle{
     \vartheta (z)=\vartheta (z|\tau) = -\sum_{k\in \mathbb{Z}} \exp \left( \pi \imath \tau (k + \frac{1}{2})^2 + 2\pi \imath (z + \frac{1}{2}) (k + \frac{1}{2}) \right)\,,\quad {\rm Im}(\tau)>0\,.
}\end{array}\eq
It is odd $\vth(-z)=-\vth(z)$ and has simple zero at $z=0$.
The  Kronecker elliptic function is defined as
\beq\displaystyle{
\label{a0962}
    \phi(z, u) =
            \frac{\vartheta'(0) \vartheta (z + u)}{\vartheta (z) \vartheta (u)}=\phi(u,z)\,.
}\eq
 As function of $z$ it has simple pole at $z=0$ and
$
\res\limits_{z=0}\phi(z,u)=1\,.
$
The quasi-periodic properties on the lattice of periods $\Gamma=\mZ\oplus\mZ\tau$ (of elliptic curve $\mC/\Gamma$):
 \beq\label{a0961}
  \begin{array}{l}
  \displaystyle{
 \vth(z+1)=-\vth(z)\,,\qquad \vth(z+\tau)=-e^{-\pi\imath\tau-2\pi\imath z}\vth(z)\,,
 }
 \end{array}
 \eq
 so that
 \beq\label{a096}
  \begin{array}{l}
  \displaystyle{
 \phi(z+1,u)= \phi(z,u)\,,\qquad  \phi(z+\tau,u)= e^{-2\pi\imath u}\phi(z,u)\,.
 }
 \end{array}
 \eq
 The function (\ref{a0962}) satisfies the following addition formula:
%
\beq \begin{array}{c} \label{Fay} \displaystyle{
    \phi(z_1, u_1) \phi(z_2, u_2) = \phi(z_1, u_1 + u_2) \phi(z_2 - z_1, u_2) + \phi(z_2, u_1 + u_2) \phi(z_1 - z_2, u_1)
}\end{array}\eq
and the identity
\beq\begin{array}{c} \displaystyle{
\label{a0964}
    \phi(z, u) \phi(z, -u) = \wp (z) - \wp (u)\,.
}\end{array}\eq
where $\wp (x)$ -- is the Weierstrass $\wp$-function. The higher order analogue of (\ref{Fay}) is as follows:
\beq\label{x1}
  \begin{array}{c}
  \displaystyle{
 \prod\limits_{i=1}^n \phi(w_i,u_i)=\sum\limits_{i=1}^n
 \phi \Bigl (w_i,\sum\limits_{l=1}^nu_l \Bigr )\prod\limits_{j\neq
 i}^n\phi(w_j-w_i,u_j)\,,\quad n\in\mZ_+\,.
 }
 \end{array}
 \eq

\paragraph{R-matrix.} Define the set of $M^2$ functions:
 \beq\label{a08}
 \begin{array}{c}
  \displaystyle{
 \vf_a(z,\om_a+\hbar)=\exp(2\pi\imath\frac{a_2z}{M})\,\phi(z,\om_a+\hbar)\,,\quad
 \om_a=\frac{a_1+a_2\tau}{M}\,,
 }
 \end{array}
 \eq
where $M\in\mZ_+$ is an integer number and $a=(a_1, a_2)\in\mZ_M\times\mZ_M$.

For construction of the elliptic Baxter-Belavin $R$-matrix \cite{Baxter,Belavin} (see also \cite{LOZ15}) introduce the following matrix basis in $\MatM$:
\beq\label{a971}\begin{array}{c} \displaystyle{
        T_\al = \exp \left( \al_1 \al_2 \frac{\pi \imath}{M} \right) Q^{\al_1} \Lambda^{\al_2}, \quad \al = (\al_1, \al_2)\in \mZ_M \times \mZ_M\,,
}\end{array}\eq
where $Q,\Lambda\in\MatM$ are given by
\beq\label{a041}
 \begin{array}{c}
  \displaystyle{
 Q_{kl}=\delta_{kl}\exp(\frac{2\pi
 \imath}{{ M}}k)\,,\ \ \
 \Lambda_{kl}=\delta_{k-l+1=0\,{\hbox{\tiny{mod}}}\,
 { M}}\,,\quad Q^{ M}=\Lambda^{ M}=1_M\,.
 }
 \end{array}
 \eq
 For example, $T_0=T_{(0,0)}=1_M$ is the identity $M\times M$ matrix.
 The matrices $Q$ and $\Lambda$ satisfy relations
 \beq\label{a051}
 \begin{array}{c}
  \displaystyle{
 \Lambda^{a_2} Q^{a_1}=\exp\left(\frac{2\pi\imath}{{ M}}\,a_1
 a_2\right)Q^{a_1} \Lambda^{a_2}\,,\ a_{1,2}\in\mZ\,.
 }
 \end{array}
 \eq
 Then
$
    T_\al T_\be = \ka_{\al, \be} T_{\al + \be}$,
    $\ka_{\al, \be} = \exp \left( \frac{\pi i}{M}(\al_2 \be_1 - \al_1 \be_2) \right)
$,
where $\al+\be=(\al_1+\be_1,\al_2+\be_2)$.

Using (\ref{a08}) and (\ref{a971}) introduce the Baxter-Belavin elliptic ($\mZ_M$ symmetric) $R$-matrix:
\begin{equation}\label{BB}
\begin{array}{c}
    \displaystyle{
    R^{\hbar}_{12} (x) = \frac{1}{M}
    \sum_\al \varphi_\al (x, \frac{\hbar}{M} + \om_\al) T_\al \otimes T_{-\al}}\in\MatM^{\otimes 2}\,.
\end{array}
\end{equation}
%
The $\mZ_M$ symmetry means
that
\begin{equation}\label{r081}
\begin{array}{c}
    \displaystyle{
    (Q\otimes Q) R^{\hbar}_{12} (x) = R^{\hbar}_{12} (x) (Q\otimes Q) \,,\qquad
    (\Lambda\otimes \Lambda) R^{\hbar}_{12} (x) = R^{\hbar}_{12} (x) (\Lambda\otimes \Lambda) \,.
    }
\end{array}
\end{equation}
%
%
%
  The $R$-matrix (\ref{BB}) has simple poles in both variables ($\hbar$ or $z$) with the following residues:
 \beq\label{r05}
 \begin{array}{c}
  \displaystyle{
\res\limits_{z=0}R^{\hbar}_{12}(z)=
P_{12}\,,
\qquad
\res\limits_{\hbar=0} R^{\hbar}_{12}(z)=1_M\otimes 1_M\,.
}
 \end{array}
 \eq
 The quasi-periodic behaviour on the lattice of periods is as follows:
 \beq\label{r721}
 \begin{array}{c}
  \displaystyle{
 R_{12}^\hbar(z+1)=(Q^{-1}\otimes 1_M)R_{12}^\hbar(z)(Q\otimes 1_M)\,,
  }
 \\ \ \\
  \displaystyle{
 R_{12}^\hbar(z+\tau)=\exp(-2\pi\imath\frac{\hbar}{M})\,(\Lambda^{-1}\otimes 1_M)R_{12}^\hbar(z)(\Lambda\otimes
 1_M)\,.
  }
 \end{array}
 \eq
 More properties can be found in the Appendix of \cite{MZ1}.

 \paragraph{Proof of identities.} Main idea in proving identities from this paper is formulated in the Lemma below. The same Lemma was used in \cite{Ruij2006}. Although it is widely known in the theory of elliptic functions we briefly recall its proof for readers convenience.

\begin{lemma}\label{Lem41}
 Let $f(z)$ be an entire function on elliptic curve $\Sigma_\tau=\mC/(\mZ\oplus\tau\mZ)$ with the following quasi-periodic properties on the lattice of periods $\mZ\oplus\tau\mZ$:
 \beq\label{e901}
 \begin{array}{c}
  \displaystyle{
  f(z+1)=f(z)\,,
  }
 \\ \ \\
  \displaystyle{
 f(z+\tau)=e^{2\pi \imath \al}f(z)\,,\quad \al\in\mC\,.
  }
 \end{array}
 \eq
 Then this function is equal to zero if $\al\notin \mZ\oplus\tau\mZ$ (that is $\al\neq n_1+n_2\tau$, $n_1,n_2\in\mZ$).
\end{lemma}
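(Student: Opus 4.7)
The plan is to use the $1$-periodicity to expand $f$ in a Fourier series along the real direction, and then use the $\tau$-quasi-periodicity to force every Fourier coefficient to vanish. Since $f$ is entire and $f(z+1)=f(z)$, standard Fourier theory provides the expansion
\begin{equation}
f(z)=\sum_{n\in\mZ} c_n\, e^{2\pi\imath n z},
\end{equation}
with coefficients $c_n=\int_0^1 f(z)\,e^{-2\pi\imath n z}\,dz$. The series converges absolutely and uniformly on compact subsets of $\mC$: being entire and $1$-periodic, $f$ is bounded on every horizontal strip of finite height, which (by Cauchy's theorem applied to large rectangles in such a strip) yields the needed decay of the $c_n$ in both directions of $n$.

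Substituting this expansion into $f(z+\tau)=e^{2\pi\imath\al} f(z)$ and matching coefficients gives
\begin{equation}
c_n\bigl(e^{2\pi\imath n \tau}-e^{2\pi\imath\al}\bigr)=0 \quad \textrm{for every } n\in\mZ.
\end{equation}
Hence any nonvanishing $c_n$ would force $\al-n\tau\in\mZ$, that is, $\al\in\mZ\oplus\tau\mZ$, contradicting the hypothesis. Therefore all $c_n$ vanish and $f\equiv 0$.

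There is essentially no genuine obstacle; the only technical point is justifying the Fourier expansion, which is classical for entire periodic functions. An alternative proof avoids Fourier analysis entirely: the logarithmic derivative $f'/f$ inherits double periodicity from the quasi-periodicity of $f$, so $\oint_{\partial D}(f'/f)\,dz=0$ over a fundamental parallelogram $D$; by the argument principle $f$ has no zeros in $D$, hence $f=e^{g(z)}$ with $g$ entire. Then $g'$ is entire and doubly periodic, thus constant by Liouville, making $g(z)=cz+d$ linear, and the two quasi-periodicities at once force $c=2\pi\imath n_1$ and $c\tau=2\pi\imath(\al+n_2)$ with $n_1,n_2\in\mZ$, which again yields $\al\in\mZ\oplus\tau\mZ$.
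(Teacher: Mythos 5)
Your proof is correct, and it takes a genuinely different route from the paper's. The paper normalizes $f$ rather than expanding it: setting ${\ti f}(z)=e^{2\pi\imath\be z}f(z)$ with the real parameter $\be=-{\rm Im}(\al)/{\rm Im}(\tau)$, both multipliers of ${\ti f}$ under $z\mapsto z+1$ and $z\mapsto z+\tau$ acquire unit modulus, so $|{\ti f}|$ becomes genuinely doubly periodic and hence bounded; Liouville then makes ${\ti f}$ constant, and since $\al\notin\mZ\oplus\tau\mZ$ forces at least one multiplier to differ from $1$, the constant vanishes. Your main argument instead exploits the $1$-periodicity to write $f(z)=\sum_{n}c_n e^{2\pi\imath nz}$ (your convergence justification by contour shifting, giving $|c_n|\le M_y\,e^{2\pi ny}$ for every height $y$, is sound) and matches coefficients under the $\tau$-shift, so that a nonzero $c_n$ would force $\al-n\tau\in\mZ$. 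What your route buys is a complete classification: it shows at once that when $\al\equiv n\tau\ ({\rm mod}\ \mZ)$ the solution space is exactly $\mC\cdot e^{2\pi\imath nz}$, refining the paper's closing remark that in the degenerate case $f={\rm const}\cdot\exp\bigl(-2\pi\imath z\,\frac{\al-\bar\al}{\tau-\bar\tau}\bigr)$; what the paper's normalization trick buys is brevity and freedom from Fourier-analytic overhead. Your alternative second proof is actually the one closer in spirit to the paper (both end with Liouville), but you should state explicitly that it is a proof by contradiction: forming $f'/f$ and invoking the argument principle presupposes $f\not\equiv 0$, and one must translate the fundamental parallelogram so that no zeros of $f$ lie on its boundary before integrating over it.
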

\begin{proof}
Consider the function
 \beq\label{e902}
 \begin{array}{c}
  \displaystyle{
  {\ti f}(z)=e^{2\pi\imath \be z}f(z)\,,\quad \be\in{\mathbb R}\,.
  }
 \end{array}
 \eq
 It has the following properties:
 \beq\label{e903}
 \begin{array}{c}
  \displaystyle{
  {\ti f}(z+1)=e^{2\pi\imath\be}{\ti f}(z)\,,
  }
 \\ \ \\
  \displaystyle{
 {\ti f}(z+\tau)=e^{2\pi \imath (\al+\be\tau)}{\ti f}(z)\,.
  }
 \end{array}
 \eq
 By fixing the real valued parameter $\be$ as
 \beq\label{e904}
 \begin{array}{c}
  \displaystyle{
  \be=-\frac{{\rm Im}(\al)}{{\rm Im}(\tau)}=-\frac{\al-\bar\al}{\tau-\bar\tau}
  }
 \end{array}
 \eq
 we come to
 \beq\label{e905}
 \begin{array}{c}
  \displaystyle{
  {\ti f}(z+1)=g_1{\ti f}(z)\,,\quad  g_1=\exp\Big(-2\pi\imath\frac{\al-\bar\al}{\tau-\bar\tau}\Big)\,,
  }
 \\ \ \\
  \displaystyle{
 {\ti f}(z+\tau)=g_\tau{\ti f}(z)\,,\quad g_\tau=\exp\Big(2\pi \imath ({\rm Re}(\al)+\be{\rm Re}(\tau))\Big)=
 \exp\Big(2\pi\imath\frac{\tau\bar\al-\al\bar\tau}{\tau-\bar\tau}\Big)\,.
  }
 \end{array}
 \eq
 In the case $\al=n_1+n_2\tau$ for some $n_1,n_2\in\mZ$ the function ${\ti f}$ becomes double periodic, i.e. $g_1=g_\tau=1$.
 Any entire double-periodic function on elliptic curve is a constant (since elliptic curve is compact). Therefore,
 due to (\ref{e902}), (\ref{e904}) we obtain $f(z)={\rm const}\cdot\exp(-2\pi\imath z\frac{\al-\bar\al}{\tau-\bar\tau})$\,.

 If $\al\notin \mZ\oplus\tau\mZ$ then $g_1\neq 1$ and/or $g_\tau\neq 1$. At the same time $|g_1|=|g_\tau|=1$. Then due to the Liouville theorem (the one which says that every bounded entire holomorphic function on elliptic curve is constant) we conclude: $\ti f(z)={\rm const}$. The latter constant equals zero since $g_1\neq 1$ and/or $g_\tau\neq 1$.
\end{proof}


\section{Appendix B: proof of auxiliary lemmas}\label{sectB}
\def\theequation{B.\arabic{equation}}
\setcounter{equation}{0}

\subsection*{Proof of Lemma \ref{lemmaY}}

\begin{proof}
Consider the l.h.s. of  (\ref{lemmaY1}):
\beq\label{Y10}
 {\mathcal Y}^{\hbar}_{I,J}R^{\hbar}_{b,j_{n+1}}\dots R^{\hbar}_{b,j_l}=\overrightarrow{\prod_{j\in J}}R_{i_1,j}^{\hbar}\ldots \overrightarrow{\prod_{j\in J}}R_{i_{k},j}^{\hbar}\cdot R^{\hbar}_{b,j_{n+1}} \dots R^{\hbar}_{b,j_l}\,.
\eq
Move $R^{\hbar}_{b,j_{n+1}}$ to the left using the Yang-Baxter equation:
\beq\label{Y11}
\begin{array}{lll}
\displaystyle{
\overrightarrow{\prod_{j\in J}}R_{i_{k},j}^{\hbar}\cdot R^{\hbar}_{b,j_{n+1}}=\left(R^{\hbar}_{i_k,j_1}\dots R^{\hbar}_{i_k,j_{n-1}}R^{\hbar}_{i_k,b}R^{\hbar}_{i_k,j_{n+1}} \dots R^{\hbar}_{i_k,j_{l}}\right)R^{\hbar}_{b,j_{n+1}}=
}
\\ \ \\
=R^{\hbar}_{i_k,j_1}\dots R^{\hbar}_{i_k,j_{n-1}}R^{\hbar}_{i_k,b}R^{\hbar}_{i_k,j_{n+1}}R^{\hbar}_{b,j_{n+1}} \dots R^{\hbar}_{i_k,j_{l}}
\\ \ \\
=R^{\hbar}_{i_k,j_1}\dots R^{\hbar}_{i_k,j_{n-1}}R^{\hbar}_{b,j_{n+1}}R^{\hbar}_{i_k,j_{n+1}}R^{\hbar}_{i_k,b} \dots R^{\hbar}_{i_k,j_{l}}=
\\ \ \\
=R^{\hbar}_{b,j_{n+1}}R^{\hbar}_{i_k,j_1}\dots R^{\hbar}_{i_k,j_{n-1}}R^{\hbar}_{i_k,j_{n+1}}R^{\hbar}_{i_k,b} \dots R^{\hbar}_{i_k,j_{l}}\,.
\end{array}
\eq
In the second and the last line of (\ref{Y11}) we used (\ref{QYB3}), and in the third line we used (\ref{QYB2}). In the same way, by moving the product $ R^{\hbar}_{b,j_{n+1}} \dots R^{\hbar}_{b,j_l}$ to the left through $\displaystyle\overrightarrow{\prod_{j\in J}}R_{i_{k},j}^{\hbar}$ one obtains:
\beq\label{Y12}
\overrightarrow{\prod_{j\in J}}R_{i_{k},j}^{\hbar}\cdot R^{\hbar}_{b,j_{n+1}} \dots R^{\hbar}_{b,j_l}= R^{\hbar}_{b,j_{n+1}} \dots R^{\hbar}_{b,j_l}\cdot \overrightarrow{\prod\limits_{\substack {j\in J \\ j\neq b}}}R_{i_{k},j}^{\hbar}\cdot R^{\hbar}_{i_k,b} \,.
\eq
By applying (\ref{Y12}) to each product $\displaystyle\overrightarrow{\prod_{j\in J}}R_{i_l,j}^{\hbar}$ in (\ref{Y10})  and by moving each $R^{\hbar}_{i_l,b}$ to the right we get (\ref{lemmaY1}).
  \end{proof}

\subsection*{Proof of Lemma \ref{Lem31}}

\begin{proof}
  Arguments of $R$-matrices of the form $x_a-x_b$ appear in $\mF_1$, and there are no such arguments in $\mF_2$. Therefore,
 \beq\label{e56}
  \begin{array}{c}
  \displaystyle{
  \res\limits_{x_a=x_b}\mF=\res\limits_{x_a=x_b}\mF_1\,,\quad b=1,\ldots a-1,a+1\ldots,N\,,
 }
 \end{array}
 \eq
   and we need to prove\footnote{We should notice that the structure of terms in the sum $\mF_1$ excludes appearance
   of higher order poles. This is why we are looking for simple poles only.}
 \beq\label{e57}
  \begin{array}{c}
  \displaystyle{
  \res\limits_{x_a=x_b}\mF_1=0\,.
 }
 \end{array}
 \eq
  Consider the expression $\mF_1$ (\ref{e531}). Due to the structure of poles (\ref{r05}) a nonzero input
  into the l.h.s. of (\ref{e57}) is provided by the terms (in $\mF_1$) which contain $R^\hbar_{ab}$ or $R^\hbar_{ba}$. It happens in two cases:

  1. $a\in I$ and $b\in I^c$;

  2. $b\in I$ and $a\in I^c$.

 \noindent Let $a<b$. Fix some set $J$ of $(k-1)$ elements which does not contain $a$ and $b$. Denote by $J^\bullet=J^c\setminus\{a,b\}$. In the first case we choose $I=J\cup\{a\}$ and $I^c=J^\bullet\cup\{b\}$ and in the second case $I=J\cup\{b\}$ and $I^c=J^\bullet\cup\{a\}$.
We are going to show that the residue of the first one is exactly cancelled out by
 the residue of the second one:
 \beq\label{e571}
  \begin{array}{c}
 \res\limits_{z_a=z_b} \mathcal{R}_{J\cup\{a\},J^\bullet\cup\{b\}}^{\hbar}\cdot{\mathcal Y}^{\hbar}_{\mathcal N_2,J\cup\{a\}} \cdot\mathcal{R'}_{J\cup\{a\},J^\bullet\cup\{b\}}^{\hbar}=
 \\ \ \\
 -\res\limits_{z_a=z_b} \mathcal{R}_{J\cup\{b\},J^\bullet\cup\{a\}}^{\hbar}\cdot{\mathcal Y}^{\hbar}_{\mathcal N_2,J\cup\{b\}} \cdot\mathcal{R'}_{J\cup\{b\},J^\bullet\cup\{a\}}^{\hbar}\,.
 \end{array}
 \eq
In order to prove this we separate in the l.h.s. of  (\ref{e571}) those $R$-matrices which contain index $a$ or $b$ using relations (\ref{l21})-(\ref{l22}) and (\ref{lemmaY1}):

\beq\label{e572}
\mathcal{R}_{J\cup\{a\},J^\bullet\cup\{b\}}^{\hbar}=\mathcal{R}_{\{a\},J\cup J^\bullet\cup\{b\}}^{\hbar}\cdot\mathcal{R}_{J\cup J^\bullet,\{b\}}^{\hbar}\cdot\mathcal{R}_{J,J^\bullet}^{\hbar}\left(\mathcal{R}_{J^\bullet,\{b\}}^{\hbar}\right)^{-1} \left(\mathcal{R}_{\{a\},J}^{\hbar}\right)^{-1}\,,
\eq

\beq\label{e573}
{\mathcal Y}^{\hbar}_{\mathcal N_2,J\cup\{a\}}=R_{a,j_{n+1}}^{\hbar}\ldots R_{a,j_{k-1}}^{\hbar}\cdot{\mathcal Y}^{\hbar}_{\mathcal N_2,J}\cdot{\mathcal Y}^{\hbar}_{\mathcal N_2,\{a\}}\cdot\left(R_{a,j_{n+1}}^{\hbar}\ldots R_{a,j_{k-1}}^{\hbar}\right)^{-1}\,,
\eq

\beq\label{e574}
\mathcal{R'}_{J\cup\{a\},J^\bullet\cup\{b\}}^{\hbar}=\left(\mathcal{R'}_{\{a\},J}^{\hbar}\right)^{-1}\left(\mathcal{R'}_{J^\bullet,\{b\}}^{\hbar}\right)^{-1}\mathcal{R'}_{J,J^\bullet}^{\hbar}\cdot\mathcal{R'}_{J\cup J^\bullet,\{b\}}^{\hbar}\cdot \mathcal{R'}_{\{a\},J\cup J^\bullet\cup\{b\}}^{\hbar}\,.
\eq
Then using (\ref{e572})-(\ref{e574}) we simplify the l.h.s. of (\ref{e571}) before taking the residue:
\beq\label{e575}
\begin{array}{lll}
\mathcal{R}_{J\cup\{a\},J^\bullet\cup\{b\}}^{\hbar}{\mathcal Y}^{\hbar}_{\mathcal N_2,J\cup\{a\}}\mathcal{R'}_{J\cup\{a\},J^\bullet\cup\{b\}}^{\hbar}=\mathcal{R}_{\{a\},J\cup J^\bullet\cup\{b\}}^{\hbar}\cdot\mathcal{R}_{J\cup J^\bullet,\{b\}}^{\hbar}\cdot\mathcal{R}_{J,J^\bullet}^{\hbar}\left(\mathcal{R}_{J^\bullet,\{b\}}^{\hbar}\right)^{-1}
\\ \ \\
{\mathcal Y}^{\hbar}_{\mathcal N_2,J}\cdot{\mathcal Y}^{\hbar}_{\mathcal N_2,\{a\}}\left(\mathcal{Y}^{\hbar}_{\{a\},J}\right)^{-1}\cdot\left(\mathcal{R'}_{J^\bullet,\{b\}}^{\hbar}\right)^{-1}\mathcal{R'}_{J,J^\bullet}^{\hbar}\cdot\mathcal{R'}_{J\cup J^\bullet,\{b\}}^{\hbar}\cdot \mathcal{R'}_{\{a\},J\cup J^\bullet\cup\{b\}}^{\hbar}\,.
\end{array}
\eq
Here we reduced $\left(\mathcal{R}_{\{a\},J}^{\hbar}\right)^{-1}$ from (\ref{e572}) and $R_{a,j_{n+1}}^{\hbar}\ldots R_{a,j_{k-1}}^{\hbar}$ from (\ref{e573}) and also used that
$$\displaystyle \left(R_{a,j_{n+1}}^{\hbar}\ldots R_{a,j_{k-1}}^{\hbar}\right)^{-1}\left(\mathcal{R'}_{\{a\},J}^{\hbar}\right)^{-1}=\left(\mathcal{Y}^{\hbar}_{\{a\},J}\right)^{-1}.$$
Next, we move $\left(\mathcal{R}_{J^\bullet,\{b\}}^{\hbar}\right)^{-1} $ to the right through ${\mathcal Y}^{\hbar}_{\mathcal N_2,J}{\mathcal Y}^{\hbar}_{\mathcal N_2,\{a\}}\left(\mathcal{Y}^{\hbar}_{\{a\},J}\right)^{-1}$ since it contains pairwise distinct indices:
\beq\label{e576}
\begin{array}{lll}
\mathcal{R}_{J\cup\{a\},J^\bullet\cup\{b\}}^{\hbar}{\mathcal Y}^{\hbar}_{\mathcal N_2,J\cup\{a\}}\mathcal{R'}_{J\cup\{a\},J^\bullet\cup\{b\}}^{\hbar}=\mathcal{R}_{\{a\},J\cup J^\bullet\cup\{b\}}^{\hbar}\cdot\mathcal{R}_{J\cup J^\bullet,\{b\}}^{\hbar}\cdot\mathcal{R}_{J,J^\bullet}^{\hbar}
\\ \ \\
{\mathcal Y}^{\hbar}_{\mathcal N_2,J}\cdot{\mathcal Y}^{\hbar}_{\mathcal N_2,\{a\}}\left(\mathcal{Y}^{\hbar}_{\{a\},J}\right)^{-1}\cdot\left(\mathcal{R}_{J^\bullet,\{b\}}^{\hbar}\right)^{-1} \left(\mathcal{R'}_{J^\bullet,\{b\}}^{\hbar}\right)^{-1}\mathcal{R'}_{J,J^\bullet}^{\hbar}\cdot\mathcal{R'}_{J\cup J^\bullet,\{b\}}^{\hbar} \cdot\mathcal{R'}_{\{a\},J\cup J^\bullet\cup\{b\}}^{\hbar}\,.
\end{array}
\eq
Now we take the residue at $z_a=z_b$ by replacing  $R_{a,b}^{\hbar}$ in  $\mathcal{R}_{\{a\},J\cup J^\bullet\cup\{b\}}^{\hbar}$ with the permutation operator $P_{ab}$ and move it to the right:
\beq\label{e577}
\begin{array}{c}
\res\limits_{z_a=z_b}\mathcal{R}_{J\cup\{a\},J^\bullet\cup\{b\}}^{\hbar}{\mathcal Y}^{\hbar}_{\mathcal N_2,J\cup\{a\}}\mathcal{R'}_{J\cup\{a\},J^\bullet\cup\{b\}}^{\hbar}=
\\ \ \\
R_{a,a+1}^{\hbar}\ldots R_{a,b-1}^{\hbar}P_{ab}R_{a,b+1}^{\hbar}\dots R_{a,N}^{\hbar}\cdot R_{b-1,b}^{\hbar}\ldots R_{a+1,b}^{\hbar} R_{a-1,b}^{\hbar}\ldots R_{1,b}^{\hbar}\cdot\mathcal{R}_{J,J^\bullet}^{\hbar}\cdot{\mathcal Y}^{\hbar}_{\mathcal N_2,J}\cdot{\mathcal Y}^{\hbar}_{\mathcal N_2,\{a\}}
\\ \ \\
\left(\mathcal{Y}^{\hbar}_{\{a\},J}\right)^{-1}\cdot\left(\mathcal{R}_{J^\bullet,\{b\}}^{\hbar}\right)^{-1} \left(\mathcal{R'}_{J^\bullet,\{b\}}^{\hbar}\right)^{-1}\mathcal{R'}_{J,J^\bullet}^{\hbar}\mathcal{R'}_{J\cup J^\bullet,\{b\}}^{\hbar} \mathcal{R'}_{\{a\},J\cup J^\bullet\cup\{b\}}^{\hbar}=
\\ \ \\
=\underbrace{R_{a,a+1}^{\hbar}\ldots R_{a,b-1}^{\hbar}}\cdot R_{b,b+1}^{\hbar}\dots R_{b,N}^{\hbar}\cdot \underbrace{R_{b-1,a}^{\hbar}\ldots R_{a+1,a}^{\hbar}} R_{a-1,a}^{\hbar}\ldots R_{1,a}^{\hbar}\cdot\mathcal{R}_{J,J^\bullet}^{\hbar}\cdot{\mathcal Y}^{\hbar}_{\mathcal N_2,J}\cdot{\mathcal Y}^{\hbar}_{\mathcal N_2,\{b\}}\cdot
\\ \ \\
\left(\mathcal{Y}^{\hbar}_{\{b\},J}\right)^{-1}\cdot\left(\mathcal{R}_{J^\bullet,\{a\}}^{\hbar}\right)^{-1} \left(\mathcal{R'}_{J^\bullet,\{a\}}^{\hbar}\right)^{-1}\mathcal{R'}_{J,J^\bullet}^{\hbar}\cdot P_{ab}\cdot \mathcal{R'}_{J\cup J^\bullet,\{b\}}^{\hbar} \mathcal{R'}_{\{a\},J\cup J^\bullet\cup\{b\}}^{\hbar}=
\\ \ \\
\displaystyle{
=\prod_{j=a+1}^{b-1} \phi(h,z_a-z_j)\phi(h,z_j-z_a)\cdot
\mathcal{R}_{\{b\},J\cup J^\bullet\cup\{a\}}^{\hbar}\mathcal{R}_{J\cup J^\bullet,\{a\}}^{\hbar}\cdot\mathcal{R}_{J,J^\bullet}^{\hbar}\cdot{\mathcal Y}^{\hbar}_{\mathcal N_2,J}\cdot{\mathcal Y}^{\hbar}_{\mathcal N_2,\{b\}}\cdot
}
\\ \ \\
\left(\mathcal{Y}^{\hbar}_{\{b\},J}\right)^{-1}\cdot\left(\mathcal{R}_{J^\bullet,\{a\}}^{\hbar}\right)^{-1} \left(\mathcal{R'}_{J^\bullet,\{a\}}^{\hbar}\right)^{-1}\mathcal{R'}_{J,J^\bullet}^{\hbar}\cdot P_{ab}\cdot \mathcal{R'}_{J\cup J^\bullet,\{b\}}^{\hbar} \mathcal{R'}_{\{a\},J\cup J^\bullet\cup\{b\}}^{\hbar}\,.
\end{array}
\eq
In the last equality we reduced the underbraced factors which give $\displaystyle \prod_{j=a+1}^{b-1} \phi(h,z_a-z_j)\phi(h,z_j-z_a)$   due to (\ref{q03}).  Also,  following the definitions (\ref{RIJ1}) and (\ref{RIJ2}) we performed transformations $R_{b,b+1}^{\hbar}\dots R_{b,N}^{\hbar}=\mathcal{R}_{\{b\},J\cup J^\bullet\cup\{a\}}^{\hbar}$  and $R_{a-1,a}^{\hbar}\ldots R_{1,a}^{\hbar}=\mathcal{R}_{J\cup J^\bullet,\{a\}}^{\hbar}$.

In a similar way we transform the r.h.s. of (\ref{e571}):
$$
\begin{array}{c}
\res\limits_{z_a=z_b}\mathcal{R}_{J\cup\{b\},J^\bullet\cup\{a\}}^{\hbar}{\mathcal Y}^{\hbar}_{\mathcal N_2,J\cup\{b\}}\mathcal{R'}_{J\cup\{b\},J^\bullet\cup\{a\}}^{\hbar}=\res\limits_{z_a=z_b}\mathcal{R}_{\{b\},J\cup J^\bullet\cup\{a\}}^{\hbar}\cdot\mathcal{R}_{J\cup J^\bullet,\{a\}}^{\hbar}\cdot\mathcal{R}_{J,J^\bullet}^{\hbar}
\\ \ \\
{\mathcal Y}^{\hbar}_{\mathcal N_2,J}\cdot{\mathcal Y}^{\hbar}_{\mathcal N_2,\{b\}}\left(\mathcal{Y}^{\hbar}_{\{b\},J}\right)^{-1}\cdot\left(\mathcal{R}_{J^\bullet,\{a\}}^{\hbar}\right)^{-1} \left(\mathcal{R'}_{J^\bullet,\{a\}}^{\hbar}\right)^{-1}\mathcal{R'}_{J,J^\bullet}^{\hbar}\cdot\mathcal{R'}_{J\cup J^\bullet,\{a\}}^{\hbar} \cdot\mathcal{R'}_{\{b\},J\cup J^\bullet\cup\{a\}}^{\hbar}\,.=
\\ \ \\
=-\mathcal{R}_{\{b\},J\cup J^\bullet\cup\{a\}}^{\hbar}\cdot\mathcal{R}_{J\cup J^\bullet,\{a\}}^{\hbar}\cdot\mathcal{R}_{J,J^\bullet}^{\hbar}\cdot
{\mathcal Y}^{\hbar}_{\mathcal N_2,J}\cdot{\mathcal Y}^{\hbar}_{\mathcal N_2,\{b\}}\left(\mathcal{Y}^{\hbar}_{\{b\},J}\right)^{-1}\cdot\left(\mathcal{R}_{J^\bullet,\{a\}}^{\hbar}\right)^{-1} \left(\mathcal{R'}_{J^\bullet,\{a\}}^{\hbar}\right)^{-1}\mathcal{R'}_{J,J^\bullet}^{\hbar}\cdot
\end{array}
$$
\beq\label{e578}
\begin{array}{c}
R_{N,a}^{\hbar}\dots R_{b+1,a}^{\hbar}R_{b-1,a}^{\hbar}\dots R_{a+1,a}^{\hbar}\cdot R_{b,1}^{\hbar}\dots  R_{b,a-1}^{\hbar}\cdot P_{ab}\cdot R_{b,a+1}^{\hbar}\dots  R_{b,b-1}^{\hbar}
\end{array}
\eq
and move permutation $P_{ab}$ to the left:
\beq\label{e579}
\begin{array}{c}
\res\limits_{z_a=z_b}\mathcal{R}_{J\cup\{b\},J^\bullet\cup\{a\}}^{\hbar}{\mathcal Y}^{\hbar}_{\mathcal N_2,J\cup\{b\}}\mathcal{R'}_{J\cup\{b\},J^\bullet\cup\{a\}}^{\hbar} =-\mathcal{R}_{\{b\},J\cup J^\bullet\cup\{a\}}^{\hbar}\cdot\mathcal{R}_{J\cup J^\bullet,\{a\}}^{\hbar}\cdot\mathcal{R}_{J,J^\bullet}^{\hbar}
\\ \ \\
{\mathcal Y}^{\hbar}_{\mathcal N_2,J}\cdot{\mathcal Y}^{\hbar}_{\mathcal N_2,\{b\}}\left(\mathcal{Y}^{\hbar}_{\{b\},J}\right)^{-1}\cdot\left(\mathcal{R}_{J^\bullet,\{a\}}^{\hbar}\right)^{-1} \left(\mathcal{R'}_{J^\bullet,\{a\}}^{\hbar}\right)^{-1}\mathcal{R'}_{J,J^\bullet}^{\hbar}\cdot P_{ab}
\\ \ \\
 R_{N,b}^{\hbar}\dots R_{b+1,b}^{\hbar}\underbrace{R_{b-1,b}^{\hbar}\dots R_{a+1,b}^{\hbar}}\cdot R_{a,1}^{\hbar}\dots  R_{a,a-1}^{\hbar}\cdot \underbrace{R_{b,a+1}^{\hbar}\dots  R_{b,b-1}^{\hbar}}=
 \\ \ \\
 -\mathcal{R}_{\{b\},J\cup J^\bullet\cup\{a\}}^{\hbar}\cdot\mathcal{R}_{J\cup J^\bullet,\{a\}}^{\hbar}\cdot\mathcal{R}_{J,J^\bullet}^{\hbar}\cdot{\mathcal Y}^{\hbar}_{\mathcal N_2,J}\cdot{\mathcal Y}^{\hbar}_{\mathcal N_2,\{b\}}\left(\mathcal{Y}^{\hbar}_{\{b\},J}\right)^{-1}\cdot\left(\mathcal{R}_{J^\bullet,\{a\}}^{\hbar}\right)^{-1} \left(\mathcal{R'}_{J^\bullet,\{a\}}^{\hbar}\right)^{-1}
 \\ \  \\
 {\displaystyle \mathcal{R'}_{J,J^\bullet}^{\hbar}\cdot P_{ab} \cdot\mathcal{R'}_{J\cup J^\bullet,\{b\}}^{\hbar} \mathcal{R'}_{\{a\},J\cup J^\bullet\cup\{b\}}^{\hbar}\prod_{j=a+1}^{b-1} \phi(h,z_a-z_j)\phi(h,z_j-z_a)\,.
 }
\end{array}
\eq
In the last equality we reduced the underbraced factors which give $\displaystyle \prod_{j=a+1}^{b-1} \phi(h,z_a-z_j)\phi(h,z_j-z_a)$   due to (\ref{q03}) and used the definitions (\ref{RIJ'1}) and (\ref{RIJ'2}). The r.h.s. of (\ref{e579}) equals minus r.h.s. of  (\ref{e577}). This finishes the proof.
\end{proof}

\subsection*{Proof of Lemma \ref{Lem32}}

\begin{proof}
The proof is performed for $\mF_1$ and $\mF_2$ separately, that is the relation (\ref{e58}) holds true for $\mF_1$ and $\mF_2$ separately. The poles at $z_a=z_{b}$ are contained in the middle products $\mathcal{Y}_{\mathcal N_2,I}^{\hbar}$ and $\mathcal{Y}_{\mathcal N_1,J}^{-\hbar}$ in
(\ref{e531})-(\ref{e532}) for such $I$ and $J$ that $a\in I$ and $b\in J$.

The calculation for $\mF_1$ is straightforward. For any set $I$ which contains $a$  the product $\mathcal{Y}_{\mathcal N_2,I}^{\hbar}$   has only one item with $R_{ba}^{\hbar}$ and thus has  simple pole at $z_a=z_{b}$. Due to the property (\ref{r05}) one should replace $R_{ba}$ with the permutation $-P_{ab}$ and put down  $z_a=z_{b}$.
To transform the residue to the form (\ref{e58}) we move the permutation operator to the right.

Firstly, we rewrite each item in $\mathcal F_1$ separating $R$-matrices containing indices $a$ and $b$. For this purpose relations (\ref{l21})-(\ref{l22}) and (\ref{lemmaY1})-(\ref{lemmaY2}) are used. Let us fix $$I=\{i_1<\dots<i_{m-1}<i_m=a<i_{m+1}<\dots <i_k\}$$ and transform $\mathcal{R}_{I,I^c}^{h}$ using (\ref{l21}):

\begin{equation}\label{e601}
\begin{array}{lll}
  \mathcal{R}_{I,I^c}^{h}=\mathcal{R}_{\{a\},\mathcal N_1\setminus\{a\}}^{h}\cdot \mathcal{R}_{I\setminus\{a\},I^c}^{h} \cdot\left(\mathcal{R}_{\{a\},I\setminus \{a\}}^{h}\right)^{-1}=
  \\ \ \\
  =R_{a,a+1}^{\hbar}\dots R_{a,N}^{\hbar} \cdot \mathcal{R}_{I\setminus\{a\},I^c}^{h}\cdot \left(R_{a,i_{m+1}}^{\hbar}\dots R_{a,i_{k}}^{\hbar} \right)^{-1}\,.
\end{array}
  \end{equation}
Similar transformations are performed using (\ref{l22}):
\begin{equation}\label{e602}
\begin{array}{lll}
   \mathcal{R'}_{I,I^c}^{\hbar}=\left(\mathcal{R'}_{\{a\},I\setminus \{a\}}^{h}\right)^{-1}\cdot \mathcal{R'}_{I\setminus\{a\},I^c}^{h} \cdot\mathcal{R'}_{\{a\},\mathcal N_1\setminus\{a\}}^{h}=
  \\ \ \\
   =\left(R_{a,i_1}^{\hbar}\dots R_{a,i_{m-1}}^{\hbar}\right)^{-1} \cdot \mathcal{R'}_{I\setminus\{a\},I^c}^{h}\cdot R_{a,1}^{\hbar}\dots R_{a,a-1}^{\hbar}\,.
\end{array}
  \end{equation}
  By applying consistently  (\ref{lemmaY1})-(\ref{lemmaY2}) to $\mathcal{Y}_{\mathcal N_2,I}^{\hbar}$ the latter is rewritten as follows:
  \begin{equation}\label{e603}
\begin{array}{lll}
  {\mathcal Y}^{\hbar}_{\mathcal N_2,I}=R^{\hbar}_{a,i_{m+1}}\dots R^{\hbar}_{a,i_k}\cdot R^{\hbar}_{b+1,b}\dots R^{\hbar}_{2N,b}\cdot{\mathcal Y}^{\hbar}_{\mathcal N_2\setminus\{b\},I\setminus \{a\}}\cdot{\mathcal Y}^{\hbar}_{\{b\},I\setminus \{a\}}\cdot{\mathcal Y}^{\hbar}_{\mathcal N_2\setminus\{b\},\{a\}}\cdot
  \\ \ \\
  \cdot R^{\hbar}_{ba}\cdot\left(R^{\hbar}_{a,i_{m+1}}\dots R^{\hbar}_{a,i_k}\cdot R^{\hbar}_{b+1,b}\dots R^{\hbar}_{2N,b}\right)^{-1}\,.
\end{array}
  \end{equation}
Let us calculate the residue of $\mF_1$ at $z_a=z_{b}$. The nontrivial part is  ${\mathcal Y}^{\hbar}_{\mathcal N_2,I}$, while the items $\mathcal{R}_{I,I^c}^{\hbar}$ and $\mathcal{R'}_{I,I^c}^{\hbar}$ has no pole. Plugging $${\mathcal Y}^{\hbar}_{\{b\},I\setminus \{a\}}=R^{\hbar}_{b,i_1}\dots  R^{\hbar}_{b,i_{m-1}}R^{\hbar}_{b,i_{m+1}}\dots R^{\hbar}_{b,i_{k}}$$ and $${\mathcal Y}^{\hbar}_{\mathcal N_2\setminus\{b\},\{a\}}= R^{\hbar}_{N+1,a}\dots R^{\hbar}_{b-1,a}R^{\hbar}_{b+1,a}\dots R^{\hbar}_{2N,a}$$ into (\ref{e603})
 one obtains:
  \begin{equation}\label{e604}
\begin{array}{lll}
  \res\limits_{z_a=z_b}{\mathcal Y}^{\hbar}_{\mathcal N_2,I}=
  -R^{\hbar}_{a,i_{m+1}}\dots R^{\hbar}_{a,i_k}\cdot R^{\hbar}_{b+1,b}\dots R^{\hbar}_{2N,b}\cdot{\mathcal Y}^{\hbar}_{\mathcal N_2\setminus\{b\},I\setminus \{a\}}\cdot
  \\ \ \\
  R^{\hbar}_{b,i_1}\dots  R^{\hbar}_{b,i_{m-1}}R^{\hbar}_{b,i_{m+1}}\dots R^{\hbar}_{b,i_{k}}\cdot R^{\hbar}_{N+1,a}\dots R^{\hbar}_{b-1,a}R^{\hbar}_{b+1,a}\dots R^{\hbar}_{2N,a}\cdot
  \\ \ \\
  P_{ab}\left(R^{\hbar}_{a,i_{m+1}}\dots R^{\hbar}_{a,i_k}\cdot R^{\hbar}_{b+1,b}\dots R^{\hbar}_{2N,b}\right)^{-1}\,.
\end{array}
  \end{equation}
  In order to simplify we move the permutation operator $P_{ab}$ to the right and then reduce the underlined items:
   \begin{equation}\label{e605}
\begin{array}{lll}
  \res\limits_{z_a=z_b}{\mathcal Y}^{\hbar}_{\mathcal N_2,I}=
  -R^{\hbar}_{a,i_{m+1}}\dots R^{\hbar}_{a,i_k}\cdot R^{\hbar}_{b+1,b}\dots R^{\hbar}_{2N,b}\cdot{\mathcal Y}^{\hbar}_{\mathcal N_2\setminus\{b\},I\setminus \{a\}}\cdot
  \\ \ \\
  R^{\hbar}_{b,i_1}\dots  R^{\hbar}_{b,i_{m-1}}\underline{R^{\hbar}_{b,i_{m+1}}\dots R^{\hbar}_{b,i_{k}}}\cdot R^{\hbar}_{N+1,a}\dots R^{\hbar}_{b-1,a}\underline{R^{\hbar}_{b+1,a}\dots R^{\hbar}_{2N,a}}\cdot
  \\ \ \\
 \left(\underline{R^{\hbar}_{b,i_{m+1}}\dots R^{\hbar}_{b,i_k}\cdot R^{\hbar}_{b+1,a}\dots R^{\hbar}_{2N,a}}\right)^{-1} P_{ab}=
 \\ \ \\
 -R^{\hbar}_{a,i_{m+1}}\dots R^{\hbar}_{a,i_k}\cdot R^{\hbar}_{b+1,b}\dots R^{\hbar}_{2N,b}\cdot{\mathcal Y}^{\hbar}_{\mathcal N_2\setminus\{b\},I\setminus \{a\}}\cdot R^{\hbar}_{b,i_1}\dots  R^{\hbar}_{b,i_{m-1}}\cdot R^{\hbar}_{N+1,a}\dots R^{\hbar}_{b-1,a}\cdot P_{ab}\,.
\end{array}
  \end{equation}
 Indeed, expression $R^{\hbar}_{b,i_{m+1}}\dots R^{\hbar}_{b,i_{k}}$ commutes with $R^{\hbar}_{N+1,a}\dots R^{\hbar}_{b-1,a}$ due to (\ref{QYB3}) since each multiplier has pairwise different indices. That is the underlined terms are cancelled.

 Now we are going to calculate the residue at $z_a=z_b$ of each summand containing index $a\in I$  in $\mF_1$ given in (\ref{e531}). We use (\ref{e601}), (\ref{e602}) and (\ref{e605}):
   \begin{equation}\label{e606}
  \begin{array}{lll}
  \res\limits_{z_a=z_b} \mathcal{R}_{I,I^c}^{\hbar}\cdot{\mathcal Y}^{\hbar}_{\mathcal N_2,I} \cdot\mathcal{R'}_{I,I^c}^{\hbar}=-R_{a,a+1}^{\hbar}\dots R_{a,N}^{\hbar} \cdot \mathcal{R}_{I\setminus\{a\},I^c}^{h}\cdot \underbrace{R^{\hbar}_{b+1,b}\dots R^{\hbar}_{2N,b}}\cdot{\mathcal Y}^{\hbar}_{\mathcal N_2\setminus\{b\},I\setminus \{a\}}\cdot
  \\ \ \\
 R^{\hbar}_{b,i_1}\dots  R^{\hbar}_{b,i_{m-1}}\cdot R^{\hbar}_{N+1,a}\dots R^{\hbar}_{b-1,a}\cdot P_{ab}\cdot\left(R_{a,i_1}^{\hbar}\dots R_{a,i_{m-1}}^{\hbar}\right)^{-1} \cdot \mathcal{R'}_{I\setminus\{a\},I^c}^{h}\cdot R_{a,1}^{\hbar}\dots R_{a,a-1}^{\hbar}\,.
\end{array}
\end{equation}
Next, we move $P_{ab}$ to the right and the underbraced item to the left through  $\mathcal{R}_{I\setminus\{a\},I^c}^{h}$ due to (\ref{r08}) since it does not contain indices $b, b+1,\dots 2N$:
   \begin{equation}\label{e607}
  \begin{array}{lll}
 \res\limits_{z_a=z_b} \mathcal{R}_{I,I^c}^{\hbar}\cdot{\mathcal Y}^{\hbar}_{\mathcal N_2,I} \cdot\mathcal{R'}_{I,I^c}^{\hbar}=-R_{a,a+1}^{\hbar}\dots R_{a,N}^{\hbar} \cdot  R^{\hbar}_{b+1,b}\dots R^{\hbar}_{2N,b}\cdot\mathcal{R}_{I\setminus\{a\},I^c}^{h}\cdot{\mathcal Y}^{\hbar}_{\mathcal N_2\setminus\{b\},I\setminus \{a\}}\cdot
\\ \ \\
 \underline{R^{\hbar}_{b,i_1}\dots  R^{\hbar}_{b,i_{m-1}}}\cdot R^{\hbar}_{N+1,a}\dots R^{\hbar}_{b-1,a}\cdot \left(\underline{R_{b,i_1}^{\hbar}\dots R_{b,i_{m-1}}^{\hbar}}\right)^{-1} \cdot \mathcal{R'}_{I\setminus\{a\},I^c}^{h}\cdot R_{b,1}^{\hbar}\dots R_{b,a-1}^{\hbar}\cdot P_{ab}=
 \\ \ \\
 =-G\cdot \mathcal{R}_{I\setminus\{a\},I^c}^{h}\cdot  {\mathcal Y}^{\hbar}_{\mathcal N_2\setminus\{b\},I\setminus \{a\}}\cdot R^{\hbar}_{N+1,a}\dots R^{\hbar}_{b-1,a}\cdot
 \mathcal{R'}_{I\setminus\{a\},I^c}^{h}\cdot
R_{b,1}^{\hbar}\dots R_{b,a-1}^{\hbar}\cdot P_{ab}\,.
 \end{array}
 \end{equation}
 In the last equality we combined the first terms into $G$ using the definition (\ref{e59}) and reduced the underlined items. Then  moving $R^{\hbar}_{N+1,a}\dots R^{\hbar}_{b-1,a}$ through $\mathcal{R'}_{I\setminus\{a\},I^c}^{h}$ to the right and using the definition  (\ref{e60}) one obtains:
 \begin{equation}\label{e608}
  \begin{array}{lll}
 \res\limits_{z_a=z_b} \mathcal{R}_{I,I^c}^{\hbar}\cdot{\mathcal Y}^{\hbar}_{\mathcal N_2,I} \cdot\mathcal{R'}_{I,I^c}^{\hbar}=-G\cdot \mathcal{R}_{I\setminus\{a\},I^c}^{h}\cdot  {\mathcal Y}^{\hbar}_{\mathcal N_2\setminus\{b\},I\setminus \{a\}}\cdot
 \mathcal{R'}_{I\setminus\{a\},I^c}^{h}\cdot H \,.
\end{array}
  \end{equation}
  This yields
     \beq\label{e6081}
  \begin{array}{c}
  \displaystyle{
  \res\limits_{z_a=z_{b}}\mF_1[k,N]=-G\cdot \mF_1[k-1,N-1]\cdot H\,.
 }
 \end{array}
 \eq
 The same calculation can be performed for $\mF_2$ since it  differs from $\mF_1$ by only changing $I$ to $J$, $\mathcal N_2$ to $\mathcal N_1$, $\hbar\to-\hbar$ and the total sign.  Therefore, we use the result of (\ref{e608}):
 \begin{equation}\label{e609}
  \res\limits_{z_a=z_b}\mathcal{R}_{J,J^c}^{-\hbar} {\mathcal Y}^{-\hbar}_{\mathcal N_1,J} \mathcal{  R'}_{J,J^c}^{-\hbar}=-G'\cdot \mathcal{R}_{J\setminus\{b\},J^c}^{-h}\cdot  {\mathcal Y}^{-\hbar}_{\mathcal N_1\setminus\{a\},J\setminus \{b\}}\cdot
 \mathcal{R'}_{J\setminus\{b\},J^c}^{-h}\cdot H' ,
  \end{equation}
  where
  \beq\label{e610}
  G'=R^{-\hbar}_{b,b+1}\ldots R^{-\hbar}_{b,2N} \cdot R^{-\hbar}_{a+1,a}\ldots R^{-\hbar}_{N,a}\,,
  \eq

  \beq\label{e611}
H'=R^{-\hbar}_{1,b}\ldots R_{a-1,b}^{-\hbar}\cdot R^{-\hbar}_{a,N+1}\ldots R^{-\hbar}_{a,b-1}P_{ab}\,.
  \eq
We change each $R_{i,j}^{-\hbar}$ to $-R_{j,i}^{\hbar}$ in (\ref{e610}) and (\ref{e611})  using (\ref{r08}):
 \beq\label{e612}
  G'=(-1)^{3N-a-b}R^{\hbar}_{b+1,b}\ldots R^{\hbar}_{2N,b} \cdot R^{\hbar}_{a,a+1}\ldots R^{\hbar}_{a,N}=(-1)^{a+b-N}G\,,
  \eq
    \beq\label{e613}
H'=(-1)^{a-1+b-N-1}R^{\hbar}_{b,1}\ldots R_{b,a-1}^{\hbar}\cdot R^{\hbar}_{N+1,a}\ldots R^{\hbar}_{b-1,a}P_{ab}=(-1)^{a+b-N-1}H\,.
  \eq
  Due to (\ref{e612})-(\ref{e613}) one can change $G'$ in (\ref{e609}) to $G$ and $H'$ to $H$ thus providing
   \beq\label{e614}
  \begin{array}{c}
  \displaystyle{
  \res\limits_{z_a=z_{b}}\mF_2[k,N]=-G\cdot \mF_2[k-1,N-1]\cdot H\,.
 }
 \end{array}
 \eq
  This finishes the proof.
\end{proof}


\subsection*{Acknowledgments}


We are grateful to A. Levin, A. Liashyk and S. Khoroshkin for useful discussions.
The work of M. Matushko was partially supported by Russian Science Foundation (project 20-41-09009).



\begin{small}

\end{small}

\end{document}